\newcolumntype{S}{>{\centering\arraybackslash} m{.4\linewidth} }
\newcommand{\R}{\mathbb{R}}
\newcommand{\E}{\mathbb{E}}
\newcommand{\I}{\mathbb{I}}
\newcommand{\balpha}{\boldsymbol{\alpha}}
\newcommand{\bbeta}{\boldsymbol{\beta}}
\newcommand{\bvarphi}{\boldsymbol{\varphi}}
\newcommand{\btau}{\boldsymbol{\tau}}
\newcommand{\bsigma}{\boldsymbol{\sigma}}
\newcommand{\bxi}{\boldsymbol{\xi}}
\newcommand{\beps}{\boldsymbol{\epsilon}}
\newcommand{\bphi}{\boldsymbol{\phi}}
\renewcommand{\vec}[1]{\mathbf{#1}}
\newcommand{\mat}[1]{\mathbf{#1}}
\newcommand{\abs}[1]{\left\vert#1\right\vert}
\newcommand{\norm}[1]{\left\Vert#1\right\Vert}
\newtheorem{thm}{Theorem}[section]
\newtheorem{lemma}[thm]{Lemma}
\newtheorem{prop}[thm]{Proposition}
\theoremstyle{definition}
\newtheorem{example}[thm]{Example}
\theoremstyle{remark}
\begin{document}
\begin{titlepage}
\begin{center}
{\Large\bf Information Flow in Interaction Networks}
\end{center}
\vspace{.35cm}

\begin{center}
{\large Aleksandar Stojmirovi\'c\, and Yi-Kuo Yu\footnote{to whom correspondence should be addressed}}
\vspace{0.25cm}
\small

\par \vskip .2in \noindent
National Center for Biotechnology Information\\
National Library of Medicine\\
National Institutes of Health\\
Bethesda, MD 20894\\
United States
\end{center}

\normalsize
\vspace{0.25cm}

\begin{abstract}
Interaction networks, consisting of agents linked by their interactions, are ubiquitous accross many disciplines of modern science. Many methods of analysis of interaction networks have been proposed, mainly concentrating on node degree distribution or aiming to discover clusters of agents that are very strongly connected between themselves. These methods are principally based on graph-theory or machine learning.

We present a mathematically simple formalism for modelling context-specific information propagation in interaction networks based on random walks. The context is provided by selection of sources and destinations of information and by use of potential functions that direct the flow towards the destinations. We also use the concept of dissipation to model the aging of information as it diffuses from its source. 

Using examples from yeast protein-protein interaction networks and some of the histone acetyltransferases involved in control of transcription, we demonstrate the utility of the concepts and the mathematical constructs introduced in this paper.
\end{abstract}
\end{titlepage}

\section{Introduction}

Interaction networks are abundant and have recently gained significant publicity in many diverse modern disciplines such as electronics \citep{CJS01}, sociology \citep{WF94,Newman04} and epidemiology \citep{BBPV05}. In its simplest form, an interaction network consists of a collection of entities (or agents), where two agents  are linked if they interact in some way. For example, in an acquaintance network \citep{ASBS00}, the agents represent persons and two persons are linked together if they know each other while the Woldwide Web network consists of web pages with links between pages \citep{BKFP00}. Mathematically, networks correspond exactly to graphs (or multigraphs), with agents as vertices and links as edges, which can be weighted and/or directed depending on the exact application being modeled. The key to analysis of interaction networks is the assumption of information transitivity: information can flow through or can be exchanged via paths of interactions.

Biology in post-genomic era also contains numerous examples of molecular networks \citep{Galitski04}. Metabolic networks have been modeled by representing metabolites as nodes and chemical reactions as links: two metabolites are linked if they participate in the same reaction \citep{MZ03}. Genetic networks have genes as nodes with two genes being linked if they interact through directed transcriptional regulation \citep{GBBK02}. Protein-protein interaction networks have proteins as nodes, with the links representing physical interactions (binding) between proteins \citep{PHJ04}. Large scale high-throughput studies in model organisms such as \textit{Saccharomyces cerevisiae} (baker's yeast) \citep{ICOYHS01,UGCM00}, \textit{Drosophilla melanogaster} (fruit-fly) \citep{GBBC03}, \textit{Caenorhabditis elegans} (roundworm) \citep{LABG04} and humans \citep{SWLH05,RVHH05}, provided extensive datasets of protein-protein interactions, stored in publicly-available databases such as the Database of Interacting Proteins (DIP) \citep{XSDHKE02,SMSPBE04}. Unfortunately, there is very little consistency between the protein-protein interaction data coming from different high-throughput experiments \citep{SSM03} and significant effort has been expended in devising ways to discover false positives and false negatives \citep{SSRS06}. This problem is not restricted to protein-protein interactions: microarray data also contains non-negligible inconsistencies \citep{MM04} .

Numerous approaches have been proposed for analysis of biological and, in particular, protein-protein interaction networks \citep{AS06}. However, due to space restrictions, we will refer to just a few. Most algorithms aim to discover `functional modules' \citep{HHLM99}, representing well connected clusters of nodes with the same or similar function, by using clustering techniques from graph theory and/or machine learning \citep{SPAD02,SM03,RG03,PEO04,NJAC05,XHDZ05,CSW06,CY06,HCZR06}. Very frequently, these techniques make use of additional experimental data which is not present in the network structure itself. For example, methods for discovery of complexes from protein-protein interaction networks often refer to the data from dataset from different species \citep{KSKS03,SIKS05,SSKK05}, microarray expression studies \citep{SPAD02,CY06}, or human-curated functional classifications \citep{NJAC05,CSW06}.

Our approach to analyzing interaction networks is very different, relying solely on the network structure. We model diffusion of information through the network by discrete-time random walks moving from the nodes representing the sources of information to their destinations. The choice of sources and destinations provides the \emph{context of analysis} with the nodes most affected by information flow being called \emph{Information Transduction Modules}. We use two modes of diffusion, dual to each other, which we call absorbing and emitting, with our absorbing mode directly corresponding to deeply investigated absorbing Markov chains \citep{KS76}. Random walks and corresponding Markov chains are one of the subjects of spectral graph theory \citep{Chung97} but we do not use eigenspace decomposition in our work, instead relying on a basic matrix algebra approach similar to that of \citet{KS76}. 

The algorithm \textit{Functional Flow} by \citet{NJAC05}, also modeling diffusion of information from sources, is closest to our emitting model. However, to delineate a certain biological context, we additionally direct the flow from sources to selected destinations using potential functions and allow the information content to dissipate (evaporate) from the network at each time step, thus modeling natural `aging' of information.

Our models allow investigation of several types of biological questions from protein-protein interaction networks. Many proteins perform their function in cooperation with other proteins through, often large, protein complexes. Thus, to elucidate the function of a given protein, it is useful to know the most likely members of complexes it may belong to and their relations to each other. Additionally, if two proteins are known to have similar function, what, if any, are the proteins they share in their respective complexes? To help answer such questions, we employ our absorbing diffusion mode. 

The answers to the above questions can provide the general interaction environment of one or more proteins. It is also very instructive to identify specific modules mediating interactions between distant (in network terms) proteins. Our emitting diffusion mode can be used to find possible candidates for members of such modules. Furthermore, analysis of interaction modules obtained from considering different proteins in the same biological context may lead to discovery of fundamental units of information transduction. To achieve this we developed the concept of information interference. More concrete definitions will be presented in the body of the text.

This paper is organized as follows. Section 2 outlines the theory behind our models of information diffusion in networks. For better readability, all the theorems and proofs, using mainly the basic concepts and results from the matrix algebra are given in Appendix (the reader may wish to consult the standard linear algebra textbooks such as \citep{HK71} or \citep{BR97} for background). Section 3 introduces the methods of analysis of results obtained using the concepts of Section 2, while Section 4 presents concrete examples centered around yeast histone acetyltransferases. We finish with discussion and conclusion in Section 5.

\section{Theory}

\subsection{Preliminaries}

We represent an interaction network as a weighted directed graph $\Gamma=(V,E,w)$ where $V$ is a finite set of vertices of size $n$, $E\subseteq V\times V$ is a set of edges and $w$ is a non-negative real-valued function on $V\times V$ that is positive on $E$, giving the weight of each edge (the weight of non-existing edge is defined to be $0$). Assuming an ordering of vertices in $V$, we represent a real-valued function on $V$ as a state (column) vector $\vec{\bvarphi}\in\R^n$ and the connectivity of $\Gamma$ by the \emph{weight} matrix $\mat{W}$ where $W_{ij}=w(i,j)$ (the weight of an edge from $i$ to $j$). If $\Gamma$ is an unweighted undirected graph, $\mat{W}$ is the adjacency matrix of $\Gamma$ where
\begin{equation}
W_{ij} = \begin{cases} 2 &\text{if $i=j$ and $(i,i)\in E$},\\ 1 &\text{if $i\neq j$ and $(i,j)\in E$}, \\ 0 &\text{if $(i,j)\not\in E$.}\end{cases}
\end{equation}
Throughout this paper, we will not make distinction between a vertex $v\in V$ and its corresponding state given by a particular ordering of vertices.

Let  $\mat{P}$ denote the $n\times n$ \emph{transition} matrix of $\Gamma$ where 
\begin{equation}
P_{ij} = \frac{W_{ij}}{\sum_k W_{ik}},
\end{equation}
that is, $\mat{P}$ is the weight matrix of $\Gamma$ normalized by row. The matrix $\mat{P}$ can be used to model random walks on $\Gamma$: for any pair of vertices $i$ and $j$, $P_{ij}$ gives the probability of the random walk moving from vertex $i$ to vertex $j$ in one time step, which is proportional to the weight $W_{ij}$. Since the matrix $\mat{P}$ is stochastic (all rows sum to unity), it can also be interpreted as the transition matrix for Markov chain on the set $V$. In the following sections we will model information diffusion as a random walk on $\Gamma$ with particular starting and terminating points.

\subsection{Constrained diffusion}\label{subsec:transition}

In this section we select certain vertices as sources or sinks of information and solve for the number of times a vertex is visited. Let $S$ denote the set of selected vertices, let $T=V\setminus S$ and let $m=\abs{T}$. Assuming that the first $n-m$ states correspond to vertices in $S$, we write the matrix $\mat{P}$ in the canonical form:
\begin{equation}\label{eqn:Pcannonical}
\mat{P}=\left[ \begin{array}{cc}\mat{P}_{SS} & \mat{P}_{ST}\\ \mat{P}_{TS} & \mat{P}_{TT}\end{array}\right].
\end{equation}
Here $\mat{P}_{AB}$ denotes a matrix giving probabilities of moving from $A$ to $B$ where $A,B$ stand for either $S$ or $T$. The states (vertices) belonging to the set $T$ are called \emph{transient}.

\subsubsection{Absorption in sinks}

Suppose now that the set $S$ represents the set of \emph{sinks} of information: any information reaching a sink vertex is absorbed and cannot not leave it. Let $\mat{F}(t)$ denote an $m \times (n-m)$ matrix such that $F_{ij}(t)$ is the probability that the information originating at $i\in T$ is absorbed at $j\in S$ in $t$ or fewer steps. Since information can only be absorbed once in any state $s\in S$, it follows that the information reaching $j$ avoided all other sinks. For the same reason, $F_{ij}(t)$ can be interpreted as the expected number of visits to the state $j$ of a random walk starting at $i$ for all times up to $t$.

Absorption at $j$ after not more than $t$ steps can be achieved in two ways: either the content reached vertex $j$ in the first step, with probability $P_{ij}$ or it moved to some transient vertex $k$ in the first step and was absorbed by $j$ from there in at most $t-1$ steps, with probability $P_{ik}F_{kj}(t-1)$. Therefore, we have for all $t=1,2,\ldots$,
\begin{equation}\label{eqn:sink0}
F_{ij}(t+1) = P_{ij} + \sum_{k\in T} P_{ik}F_{kj}(t),
\end{equation}
or in the matrix form
\begin{equation}\label{eqn:sink1}
\mat{F}(t+1) = \mat{P}_{TS} + \mat{P}_{TT}\mat{F}(t).
\end{equation}

We solve for the long-term or equilibrium state, where $\mat{F}(t+1)=\mat{F}(t)=\mat{F}$. In this case, Equation (\ref{eqn:sink1}) becomes
\begin{equation}\label{eqn:sink2}
\mat{F} = \mat{P}_{TS} + \mat{P}_{TT}\mat{F},
\end{equation}
or 
\begin{equation}\label{eqn:sink3}
(\I-\mat{P}_{TT})\mat{F} = \mat{P}_{TS},
\end{equation}
where $\I$ denotes the identity matrix. If $\I-\mat{P}_{TT}$ is invertible, let $\mat{G}=(\I-\mat{P}_{TT})^{-1}$. Equation (\ref{eqn:sink3}) then has a unique solution
\begin{equation}\label{eqn:sink4}
\mat{F} = \mat{G}\mat{P}_{TS}.
\end{equation}

\subsubsection{Diffusion from sources}

Now consider the dual problem where $S$ is a set of sources of information. Each source emits a unit of information at each time step and no information can enter any source: we assume any information entering a source vanishes. Let $\mat{H}(t)$ denote an $(n-m) \times m$ matrix such that $H_{ij}(t)$ is the total expected number of times the transient vertex $j$ is visited by a random walk emitted from source $i$ for the time up to $t$.

The information emitted from $i$ can arrive at $j$ at time $t$ in two different ways: either the content was emitted from $i$ at time $t$ and reached $j$ directly, or it was emitted at an earlier time step, was located at some transient vertex at time $t-1$ and moved from there to $j$ at time $t$. The former option contributes $P_{ij}$ while the latter contributes $H_{ik}(t-1)P_{kj}$ for all $k\in T$ towards $H_{ij}$. Therefore, we have for all $t=1,2,\ldots$,
\begin{equation}\label{eqn:source0}
H_{ij}(t+1) = P_{ij} + \sum_{k\in T} H_{ik}(t)P_{kj},
\end{equation}
or in the matrix form
\begin{equation}\label{eqn:source1}
\mat{H}(t+1) = \mat{P}_{ST} + \mat{H}(t)\mat{P}_{TT}.
\end{equation}
Similarly to the previous case, we are interested in the steady state, representing the total expected number of visits, where $\mat{H}(t+1)=\mat{H}(t)=\mat{H}$. In this case, Equation (\ref{eqn:source1}) becomes
\begin{equation}\label{eqn:source2}
\mat{H} = \mat{P}_{ST} + \mat{H}\mat{P}_{TT},
\end{equation}
or 
\begin{equation}\label{eqn:source3}
\mat{H}(\I-\mat{P}_{TT}) = \mat{P}_{ST}.
\end{equation}
If $\I-\mat{P}_{TT}$ is invertible, Equation (\ref{eqn:source3}) has a unique solution
\begin{equation}\label{eqn:source4}
\mat{H} = \mat{P}_{ST}\mat{G}.
\end{equation}

\subsubsection{Existence and interpretation of solutions}\label{subsubsec:Green}

It can immediately be observed that existence of solutions to Equation (\ref{eqn:source3}) and Equation (\ref{eqn:sink3}) are equivalent: they both depend on the existence of the inverse of $\I-\mat{P}_{TT}$. Specifically, they are special cases of the discrete Laplace equation on $T$ with the Dirichlet boundary condition on $S$ \citep{Chung97,CY00}.

Given a square matrix $\mat{M}$, the matrix $\I - \mat{M}$ is often called the \emph{discrete Laplace operator} of $\mat{M}$. Let $\Delta=\I - \mat{P}_{TT}$ ($\Delta$ is the discrete Laplace operator of $\mat{P}$ restricted to $T$). Equation (\ref{eqn:sink3}) can then be written as
\begin{equation}\label{eqn:laplace1}
\Delta\mat{F} = \mat{P}_{TS}.
\end{equation}
Denote by $\vec{e}_k$ the $k$-th standard basis (column) vector of length $n-m$ where $(\vec{e}_k)_j=\delta_{kj}$ ($\delta$ here is the Kronecker's delta). Let $\vec{f}_k=\mat{F}\vec{e}_k$ denote the $k$-th column of $\mat{F}$ and let $\vec{p}_k=\mat{P}_{TS}\vec{e}_k$. Then, solving Equation (\ref{eqn:laplace1}) is equivalent to solving the discrete Laplace equation
\begin{equation}\label{eqn:laplace2}
\Delta\vec{f}_k = \vec{p}_k
\end{equation}
for all $k\in S$. The standard basis vectors $\vec{e}_k$ provide exactly the \emph{Dirichlet boundary conditions} on the set $S$ (the set $S$ can be assumed to be a boundary of $T$).

It is also easy to see that Equation (\ref{eqn:source3}) can be written as 
\begin{equation}\label{eqn:laplace3}
\mat{H}\Delta = \mat{P}_{ST}.
\end{equation}
Hence, the solution to (\ref{eqn:laplace3}) is obtained by solving the discrete Laplace equation in terms of the discrete Laplace operator of the transpose of $\mat{P}$. 

The \emph{Green's function} is defined to be the inverse of the Laplacian. In our case the inverse of $\Delta$ is exactly the matrix $\mat{G}=(\I-\mat{P}_{TT})^{-1}$ and hence the existence of solutions to Equations (\ref{eqn:source3}) and (\ref{eqn:sink3}) is equivalent to existence of the Green's functions to the corresponding Laplacian. In the absorbing Markov chain theory \citep{KS76}, the matrix $\mat{G}$ is known as the \emph{Fundamental matrix} of the corresponding absorbing Markov chain. The entry $G_{ij}$ represents the mean number of times the random walk reaches vertex $j\in T$ having started in state $i\in T$. 

We now present some elementary sufficient conditions for existence of the Green's functions of the discrete Laplacians of the graphs. The full proofs are given in Appendix \ref{app:Green}. For the development of the discrete Green's functions (for undirected graphs) in terms of the eigenvalues and eigenfunctions of the Laplacian, we refer the reader to the paper by \citet{CY00}.

\begin{prop}\label{prop:specrad0}
Suppose that $\Gamma$ is a weighted directed graph such that for every $p\in T$ there exists $s\in S$ such that there exists a directed path from $p$ to $s$. Then, the matrix $\I-\mat{P}_{TT}$ is invertible and 
\begin{equation}\label{eq:invassum0}
(\I-\mat{P}_{TT})^{-1} = \sum_{k=0}^\infty (\mat{P}_{TT})^k.
\end{equation}
\end{prop}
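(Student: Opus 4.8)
The plan is to show that the spectral radius of $\mat{P}_{TT}$ is strictly less than $1$; once this is established, the Neumann series $\sum_{k=0}^\infty (\mat{P}_{TT})^k$ converges and equals $(\I-\mat{P}_{TT})^{-1}$ by the standard argument (the partial sums telescope against $(\I-\mat{P}_{TT})$, and invertibility follows since $1$ is not an eigenvalue). So the real content is the estimate $\rho(\mat{P}_{TT})<1$, and the hypothesis about directed paths from every transient vertex to some vertex of $S$ is exactly what is needed to drive it.

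First I would observe that $\mat{P}_{TT}\ge 0$ entrywise and is a submatrix of the stochastic matrix $\mat{P}$, so all its row sums are $\le 1$; hence $\rho(\mat{P}_{TT})\le\norm{\mat{P}_{TT}}_\infty\le 1$. To get strictness, I would argue that for every transient vertex $p$ there is some power $N_p$ such that the $p$-th row sum of $(\mat{P}_{TT})^{N_p}$ is strictly less than $1$. Indeed, by hypothesis there is a directed path $p=v_0,v_1,\dots,v_\ell=s$ with $s\in S$ and all intermediate $v_i\in T$; along this path each one-step transition probability $P_{v_i v_{i+1}}$ is positive (the edge has positive weight), so the probability of the walk following exactly this path in $\ell$ steps is positive, and this is a walk that leaves $T$. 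Consequently the total probability mass remaining in $T$ after $\ell$ steps, which is precisely the $p$-th row sum of $(\mat{P}_{TT})^\ell$, is strictly below $1$. Taking $N=\max_{p\in T} N_p$ (finite since $T$ is finite) and using that row sums of $(\mat{P}_{TT})^k$ are nonincreasing in $k$ (each extra factor of the substochastic matrix $\mat{P}_{TT}$ cannot increase a row sum), I get that \emph{every} row sum of $(\mat{P}_{TT})^N$ is $\le c$ for some constant $c<1$. Therefore $\norm{(\mat{P}_{TT})^N}_\infty\le c<1$, which forces $\rho(\mat{P}_{TT})=\rho((\mat{P}_{TT})^N)^{1/N}\le c^{1/N}<1$.

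With $\rho(\mat{P}_{TT})<1$ in hand, I would conclude as follows: $\I-\mat{P}_{TT}$ is invertible because none of its eigenvalues is zero (an eigenvalue $0$ of $\I-\mat{P}_{TT}$ would be an eigenvalue $1$ of $\mat{P}_{TT}$); and the identity $(\I-\mat{P}_{TT})\sum_{k=0}^{m}(\mat{P}_{TT})^k = \I-(\mat{P}_{TT})^{m+1}$ lets $m\to\infty$, with $(\mat{P}_{TT})^{m+1}\to 0$ since $\rho(\mat{P}_{TT})<1$, yielding $(\I-\mat{P}_{TT})^{-1}=\sum_{k=0}^\infty(\mat{P}_{TT})^k$.

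The main obstacle is the strictness of the spectral-radius bound: the naive bound only gives $\rho\le 1$, and one must exploit the reachability hypothesis to leak a positive amount of probability out of $T$ in uniformly bounded time. The cleanest way to package this is through the uniform bound $\norm{(\mat{P}_{TT})^N}_\infty<1$ via the monotonicity of row sums under multiplication by a substochastic matrix, which sidesteps any appeal to Perron--Frobenius irreducibility (we are not assuming $\mat{P}_{TT}$ is irreducible). An alternative, essentially equivalent route would be to note that the reachability hypothesis says the walk restricted to $T$ is an absorbing chain with absorption guaranteed, hence $(\mat{P}_{TT})^k\to 0$ directly; but making that precise still requires the same row-sum argument, so I would present it as above.
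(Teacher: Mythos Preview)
Your proof is correct and follows essentially the same route as the paper: both show $\norm{(\mat{P}_{TT})^N}_\infty<1$ for a suitable $N$ by using the reachability hypothesis to leak positive mass out of $T$ along a path, together with the monotonicity of row sums under multiplication by a substochastic matrix, and then conclude $\rho(\mat{P}_{TT})<1$ and invoke the Neumann series. The paper packages these ingredients into a chain of auxiliary lemmas and takes $N=|V|$ as the uniform bound, but the substance is identical to what you wrote.
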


Proposition \ref{prop:specrad0} thus guarantees existence of the Green's functions if every transient vertex can be connected to a source or sink via a directed path. If the underlying graph is undirected, this condition can be rephrased as follows: every connected component of $V$ contains at least one vertex from $S$. 

In the context of information diffusion, the connectivity condition implies that all information entering the transient set at any specific time must eventually leave it, either by absorption into $S$ when $S$ is a set sinks, or by dissipation when $S$ represents the set of sources. We will further discuss the concept of dissipation in \ref{sec:dissipation}.

Assuming the Green's function exists, the entries of the matrices $\mat{F}$ and $\mat{H}$ can be interpreted in several different ways. Fundamentally, both $F_{ij}$ and $H_{ij}$ represent the total expected number of times the vertex $j$ is visited by the information originating at the vertex $i$ while avoiding all members of the boundary set $S$ (the proofs are given in Appendix \ref{app:interpr1}). It is also clear, by Equation (\ref{eq:invassum0}), that $\mat{F}$ and $\mat{H}$ are both non-negative matrices and that $\mat{F}=\lim_{t\to\infty} \mat{F}(t)$ and $\mat{H}=\lim_{t\to\infty} \mat{H}(t)$. In addition, the rows of $\mat{F}$ all sum to $1$ (Lemma \ref{lemma:Fstoch} in Appendix \ref{app:interpr2}) and thus $F_{ij}$ is the overall probability an information originating from transient vertex $i$ is absorbed at the sink $j$ while avoiding all other sinks. 

If we assume that a random walk deposits a fixed amount of information content each time it visits a node, we can interpret $H_{ij}$ is the overall amount of information content originating from the source $i$ deposited at the transient vertex $j$. If $\Gamma$ is an undirected graph with symmetric weight matrix $\mat{W}$ and $S$ contains a single source, the value of $H_{ij}$ is directly proportional to the degree of the transient vertex $j$ (Appendix \ref{app:interpr2}). Hence, in this case, the total average number of times of visits for each transient node is proportional to its degree. This is no longer true if $\mat{W}$ is not symmetric.

Furthermore, we can interpret $F_{ij}$ as the sum of probabilities of paths originating at the vertex $i\in T$ and terminating at the vertex $j\in S$ that avoid all other nodes in the set $S$, and $H_{ij}$ as the sum of probabilities of paths originating at the vertex $i\in S$ and terminating at the vertex $j\in T$, also avoiding all other nodes in the set $S$. Each such path has a finite but unbounded length. However, unlike $F_{ij}$, $H_{ij}$ does not represent a probability because the events of the information being located at $j$ at the times $t$ and $t'$ are not mutually exclusive (a random walk can be at $j$ at time $t$ and revisit it at time $t'$). For $F_{ij}$, the absorbing events at different times are mutually exclusive.

\subsection{Information dissipation}\label{sec:dissipation}

It was mentioned previously that the requirement that every transient node is connected to a node in the set $S$ is effectively equivalent to the property that all information content entering the transient set leaves it at the nodes in $S$. In the present section we extend our model to allow the information to dissipate not only at those nodes but also at the transient nodes.

Let $\balpha$ and $\bbeta$ be vectors of length $n$ such that for all $i\in V$, $\alpha_i>0$ and $\beta_i>0$. We form the matrix 
$\mat{\tilde{P}}$ with entries
\begin{equation}
\tilde{P}_{ij} = \alpha_i\beta_jP_{ij},
\end{equation}
and use the new matrix to compute the matrices $\mat{\tilde{F}}$ and $\mat{\tilde{H}}$ by replacing the matrix $\mat{P}$ in the previous section with $\tilde{P}$ so that.
\begin{equation}\label{eq:dissipate1}
\mat{\tilde{F}} = \mat{\tilde{G}}\mat{\tilde{P}}_{TS}.
\end{equation}
and 
\begin{equation}\label{eq:dissipate2}
\mat{\tilde{H}} = \mat{\tilde{P}}_{ST}\mat{\tilde{G}}.
\end{equation}
where $\mat{\tilde{G}}=(\I-\mat{\tilde{P}}_{TT})^{-1}$, provided $\I-\mat{\tilde{P}}_{TT}$ is invertible.

The entry $\alpha_i$ gives the proportion of the signal leaving the vertex $i$ that is retained (we call the value of $1-\alpha_i$ the \emph{outgoing dissipation coefficient} of the node $i$) while the entry $\beta_j$ gives the proportion of the signal entering the vertex $j$ that is retained  (the value $1-\beta_j$ is called the \emph{incoming dissipation coefficient} of the node $j$). The case where $\alpha_i=\beta_i=1$ for all $i\in V$ gives back the original matrix $\mat{P}$. Note that our definition allows entries of $\balpha$ and $\bbeta$ that are greater than $1$, corresponding to negative dissipation coefficients. Such coefficients lead to amplification of the signal. However, in order for the Green's function $\mat{\tilde{G}}$ to exist, any amplification should be balanced by dissipation. 

We now establish a sufficient condition for existence of $\mat{\tilde{G}}$. The proof, as well as a discussion of its generalization, is given in Appendix \ref{app:dissipation}.

\begin{prop}\label{prop:dissipation}
Let $\alpha_*=\max\{\alpha_i:i\in V\}$ and $\beta_*=\max\{\beta_i:i\in V\}$ and suppose $\alpha_*\beta_*<1$. Then, the matrix $\I-\mat{\tilde{P}}_{TT}$ is invertible and 
\begin{equation}\label{eq:invassum1}
(\I-\mat{\tilde{P}}_{TT})^{-1} = \sum_{k=0}^\infty (\mat{\tilde{P}}_{TT})^k.
\end{equation}
\end{prop}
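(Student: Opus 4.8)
The plan is to show that the spectral radius of $\mat{\tilde{P}}_{TT}$ is strictly less than $1$; both the invertibility of $\I-\mat{\tilde{P}}_{TT}$ and the convergence of the Neumann series in Equation (\ref{eq:invassum1}) then follow by the same standard argument used for Proposition \ref{prop:specrad0}.

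First I would note that $\mat{\tilde{P}}_{TT}$ is entrywise non-negative, since $\alpha_i>0$, $\beta_j>0$ and $P_{ij}\ge 0$. Next I would bound its row sums: for any $i\in T$,
\[
\sum_{j\in T}\tilde{P}_{ij}=\alpha_i\sum_{j\in T}\beta_jP_{ij}\le\alpha_*\beta_*\sum_{j\in T}P_{ij}\le\alpha_*\beta_*,
\]
where the final inequality uses that $\mat{P}$ is stochastic, so $\sum_{j\in V}P_{ij}=1$ and hence $\sum_{j\in T}P_{ij}\le 1$. Therefore the maximum-absolute-row-sum norm obeys $\norm{\mat{\tilde{P}}_{TT}}_\infty\le\alpha_*\beta_*<1$. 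Since the spectral radius of any square matrix is bounded above by any induced operator norm, $\rho(\mat{\tilde{P}}_{TT})\le\norm{\mat{\tilde{P}}_{TT}}_\infty<1$, so $1$ is not an eigenvalue of $\mat{\tilde{P}}_{TT}$ and $\I-\mat{\tilde{P}}_{TT}$ is invertible.

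For the series formula, I would use the telescoping identity $(\I-\mat{\tilde{P}}_{TT})\sum_{k=0}^{N}(\mat{\tilde{P}}_{TT})^k=\I-(\mat{\tilde{P}}_{TT})^{N+1}$. Because $\rho(\mat{\tilde{P}}_{TT})<1$ implies $(\mat{\tilde{P}}_{TT})^{N+1}\to 0$ as $N\to\infty$ (equivalently, the partial sums are Cauchy in any matrix norm), letting $N\to\infty$ and multiplying by $\mat{\tilde{G}}=(\I-\mat{\tilde{P}}_{TT})^{-1}$ yields Equation (\ref{eq:invassum1}).

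I do not expect a genuine obstacle here; the only points requiring care are to uniformize the estimate using the maxima $\alpha_*$ and $\beta_*$ rather than the individual $\alpha_i,\beta_j$, so that a single scalar $\alpha_*\beta_*$ controls every row, and to observe that passing to the principal submatrix indexed by $T$ can only decrease row sums. One could alternatively exhibit $\mat{\tilde{P}}_{TT}$ as a two-sided diagonal scaling of $\mat{P}_{TT}$ and argue via $\rho$ directly, but the norm bound above already suffices under the stated hypothesis; the sharper (and essentially optimal) condition $\rho(\mat{\tilde{P}}_{TT})<1$ is the natural subject of the remark deferred to Appendix \ref{app:dissipation}.
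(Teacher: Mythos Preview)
Your argument is correct and is essentially identical to the paper's: both bound the row sums of $\mat{\tilde{P}}_{TT}$ by $\alpha_*\beta_*\sum_{j\in T}P_{ij}\le\alpha_*\beta_*<1$, deduce $\norm{\mat{\tilde{P}}_{TT}}_\infty<1$, and then invoke the standard spectral-radius/Neumann-series lemma (the paper's Lemma \ref{lemma:mat}) to conclude. The only cosmetic difference is that you spell out the telescoping identity for the Neumann series explicitly, whereas the paper simply cites Lemma \ref{lemma:mat}.
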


Proposition \ref{prop:dissipation} makes no assumptions on the connectivity of the graph: the equilibrium solutions exist regardless of the graph topology. The reason for the removal of the connectivity conditions is that a unit of information originating anywhere in the network has a nonzero probability of being dissipated at each time step and therefore will disappear in the long term, with a portion possibly reaching a sink in the absorbing model. The vectors of coefficients $\alpha$ and $\beta$ provide us with the ability to consider different rates of dissipation at different vertices. We demonstrate the utility of the extended model in examples involving protein-protein interaction networks (Section \ref{sec:examples}), where we use vertex specific dissipation to construct `evaporating nodes' that dissipate most of the information coming in but allow unrestricted outward flow. 

A possible further generalization of this model is for the entries of the vectors $\balpha$ and $\bbeta$ to be functions of the state variable of the dynamical system instead of constants. The dynamical system in this case would become non-linear, allowing us to model amplification or dissipation of the information depending on the time specific state of the system.

\subsection{Potentials}

Our models so far, including the dissipation modifications described above, model `free diffusion' of information through the network: the likelihood for the signal to move from vertex $i$ to vertex $j$ is proportional to the relative weight of the edge $(i,j)$ among all edges emanating from $i$ (dissipation only affects the total amount transmitted). In order to direct the flow of information towards or away from selected nodes, we adjust the weights of edges of our network graph $\Gamma$ using \emph{potentials}, real-valued monotone functions defined on the nodes that depend on the distances from selected points.
 
Let $\rho$ denote the path-metric on the weighted directed connected graph $\Gamma=(V,E,w)$, where for all $i,j\in V$, $\rho(i,j)$ denotes the sum of the reciprocals of the weights of the edges forming the shortest directed path from $i$ to $j$. Suppose 
$R$ is a subset of $T$ such that for each $k\in R$ there exists a monotone potential function  $\theta_k:\R\to\R$. For each vertex $j\in V$ define the \emph{total potential} at $j$, denoted $\Theta(j)$ by
\begin{equation}
\Theta(j) = \sum_{k\in R} \theta_k(\rho(j,k)).
\end{equation}
Let $\hat{\Gamma}$ denote the new weighted directed graph $(V,E,\hat{w})$ where
\begin{equation}\label{eq:pseudosinkpt}
\hat{W}_{ij} = W_{ij}\exp\left(- \Theta(j)\right).
\end{equation}
The form of Equation (\ref{eq:pseudosinkpt}) ensures that the signal preferentially diffuses from each vertex towards the vertices adjacent to it that have lower potential relative to other adjacent vertices. 

A vertex $i\in V$ is called a \emph{destination} if $\Theta$ has a minimum at $i$. There can be multiple destinations in a network. The natural candidates for destinations are the members of the set $S$ since all information entering them does not leave them. Some transient states, with the weights of their outgoing edges adjusted to partially accumulate the signal, are also good candidates for destinations. 

Let $K$ be a subset of $T$ and let $0\leq\gamma\leq 1$. From the already modified graph $\hat{\Gamma}$, we form the graph $\Gamma'$ represented by the weight matrix $\mat{W}'$ where
\begin{equation}\label{eq:Rgenmat1}
W'_{ij} = \begin{cases} \hat{W}_{ij} & \text{if $i\not\in K$,}\\
\gamma \hat{W}_{ij} &  \text{if $i\in K$ and $i\neq j$,}\\
\hat{W}_{ij} + (1-\gamma)\sum_{k\neq i} \hat{W}_{ik} & \text{if $j\in K$ and $i = j$}.\end{cases}
\end{equation}
The effect of this modification is to turn each vertex $i\in K$, called a \emph{pseudosink}, into a partial sink: some proportion of the weights of edges emanating out of $i$ is transferred to the edge pointing back to $i$. The parameter $\gamma$, representing the proportion of information allowed to leave each pseudosink while the remainder is accumulated, is called the \emph{pseudosink leakage coefficient}. The value 
$\gamma=1$ implies no change in edge weights.

The value $\gamma=0$ is a special case because no directed path exists between pseudosinks and source nodes in the resulting graph $\Gamma'$ and Proposition \ref{prop:specrad0} does not apply. In this case, there are two possibilities leading to the existence of the Green's function: either set the outgoing dissipation coefficient of the pseudosinks to something less than $1$, or treat the pseudosinks as parts of the boundary set $S$, as a `non-emitting source' defined in \ref{subsec:emittingmodel} below.

Note that, while dissipation is applied to the transition matrix $\mat{P}$, potentials and pseudosinks are applied to the weight matrix $\mat{W}$ prior to normalization. Since applications of potentials and pseudosinks do not commute, potentials are applied before pseudosinks, although pseudosinks can be potential centers (members of the set $R$).

\section{Theoretical Methods for Analysis}

In the previous section we introduced the basic concepts related to our models of diffusion of information through networks as well as some modifications to the underlying graph and the transition matrix that lead to biologically realistic models. After all modifications are applied, we obtain the matrices $\mat{\tilde{F}}$ and $\mat{\tilde{H}}$, the Green's functions arising where $S$ represents sinks and sources, respectively. Here we turn to the practical interpretation of these results, which depend on the boundary conditions imposed on the vertices in $S$.

\subsection{Absorbing model}

In the case where $S$ represents sinks of information (the \emph{absorbing model}), the entries of the matrix $\mat{\tilde{F}}$ have a clear probabilistic interpretation: $\tilde{F}_{ij}$ is the probability that information starting at transient vertex $i$ reaches the sink $j$ while avoiding all other sinks, taking into account the dissipation as well as the new weights induced by the potentials. Generally, each sink $j$ exerts a `region of influence', including the transient points with large $\tilde{F}_{ij}$. Depending on the distributions of sinks within the network, some transient node may have a
$\tilde{F}_{ij}$ small for all $j$: information emerging from these points is more likely to dissipate than to reach any of the sinks.

If $S'\subset S$ is a selection of sink nodes, then $\sum_{j\in S'} F_{ij}$ gives the total probability of information reaching the set $S'$ from the vertex $i$, avoiding all other nodes in $S$. In this context, we call the nodes in $S'$ \emph{explicit sinks} (since we investigate the probabilities of reaching them) and the remaining nodes in $S$ \emph{implicit sinks}, the points that serve as sinks of information but are not considered. Furthermore, if the sinks are treated as general boundary points, with boundary values not restricted to $0$ and $1$, the entries of $\mat{\tilde{F}}$ can be interpreted as temperatures \citep{ZBY07}.

\subsection{Emitting model}\label{subsec:emittingmodel}

Where $S$ represents sources (the \emph{emitting model}), the entries of $\mat{\tilde{H}}$ can be interpreted as visiting times or as information contents: $\tilde{H}_{ij}$ is the total information content emitted from the source $i$ deposited at the transient vertex $j$. Information is dissipated at all sources and the value of $\tilde{H}_{ij}$ is dependent on transient dissipation coefficients $\balpha$ and $\bbeta$ and the potentials. For biological applications, we will consider the case where at least one pseudosink is present in addition to one or several sources, with the potential directing the flow towards the pseudosinks. The distribution of entries of the $i$-th row of $\mat{\tilde{H}}$ will then describe the \emph{information transduction module} (ITM) involved in transfer of information from $i$ to the pseudosinks, with the nodes with largest entries being most significant.

Let $\bxi$ denote the vector of length $\abs{S}$ such that for all $i\in S$, $\xi_i\geq 0$. We call $\xi_i$ the \emph{source strength} of the source $i$, representing the amount of information emitted from $i$ at each time step. In this context, we call $i\in S$ an \emph{emitting source} if $\xi_i>0$ and a \emph{non-emitting source} if $\xi_i=0$. Non-emitting sources are essentially information `black holes', dissipating any information coming in and not emitting any.

\subsubsection{Total content}

For any $i\in S$, let $\beps_i$ denote the standard $i$-th row basis vector of length $n-m$, where $(\beps_i)_j=\delta_{ij}$. For $x>0$ define the vector $\bphi_i$ by
\begin{equation}
\bphi_i = \xi_i\beps\mat{\tilde{H}},
\end{equation}
that is, $\bphi_i$ denotes the $i$-th row of $\mat{\tilde{H}}$ multiplied by $\xi_i$. Its entries give the amount of information content originating from the source $i$ of strength $\xi_i$ deposited at transient vertices. The value of $\norm{\bphi_i}_1$ is then the total amount of content originating at source $i$ deposited at the transient states. In our examples in the following sections we choose the source strengths $\bxi$ so that $\norm{\bphi_i}_1$ is the same for all $i\in S$ (we call the resulting vectors $\phi_i$ normalized content vectors). The \emph{joint information content} vector, denoted $\btau$, is defined by
\begin{equation}
\btau = \sum_{i\in S} \bphi_i.
\end{equation}
The vector $\btau$ implicitly depends on the matrix $\mat{\tilde{H}}$ and the source strength vector $\bxi$: we have $\btau=\bxi\mat{\tilde{H}}$.

\subsubsection{Participation ratio}

Let $\vec{x}\in\R^n$ be any vector and recall that for any $0\leq p<\infty$, the $\ell_p$-norm of $\vec{x}$, denoted $\norm{\vec{x}}_p$, is given by $\norm{\vec{x}}_p = \left(\sum_k \abs{x_k}^p \right)^{1/p}$. Define the \emph{participation ratio} of $\vec{x}$, denoted $\pi(\vec{x})$ by 
\begin{equation}
\pi(\vec{x}) = \frac{\norm{\vec{x}}_1^2}{\norm{\vec{x}}_2^2} = \frac{\left(\sum_{k} \abs{x_k} \right)^2}{\sum_k x_k ^2}.
\end{equation}
Participation ratio is well known under a slightly different definition in the physics literature \citep{Thouless74}. It gives the number of components of $\vec{x}$ whose magnitude is `significant'. Clearly, $\pi$ is independent of the scale of $\vec{x}$: we have for any $\lambda>0$, $\pi(\lambda\vec{x})=\pi(\vec{x})$. We illustrate the usage by examples.

\begin{example}
Let $\vec{x}=[1,1,1,1,1]$. Then, $\pi(\vec{x})=\frac{5^2}{5}=5$. All components are equally significant and this is reflected in the participation ratio.
\end{example}
\begin{example}
Now consider $\vec{x}=[1,1,0,0,0]$. We have, $\pi(\vec{x})=\frac{2^2}{2}=2$. Only the first two components are non-zero and are of equal magnitude. 
\end{example}
\begin{example}
Finally, let $\vec{x}=\left[1,\frac12,\frac14,\frac18,\frac{1}{16}\right]$. We obtain $\pi(\vec{x})\approx 2.8181$. Here all five components are non-zero but their magnitudes differ significantly. The participation ratio here implies that the first two components and to a large extent the third are significant while the remaining two are much smaller. 
\end{example}

In our biological examples, we use $\pi(\btau)$ to choose the number of the transient vertices with largest total mass to display as a `significant' subgraph, together with all sources and pseudosinks.

\subsubsection{Interference}

Given the vector of source strengths $\bxi$, the entry of $\tau_j$ can be interpreted as providing the total amount of information deposited at the vertex $j$. It is also possible to investigate the interaction of the signals from different sources using the concept of destructive interference.

For any vector $\vec{x}\in\R^n$, let $\mu$ denote an \emph{interference function} such that $0\leq \mu(\vec{x})\leq \norm{\vec{x}}_1$. When applied to a vector containing information content from different sources, interference function is interpreted as removing some of the information present due to the interaction of the various information types and returning the remaining information content. Interference functions can take various forms depending on the nature of the types of information in each application.

\begin{example}
Suppose $\vec{x}$ consists of two components representing information types that are assumed to completely cancel out each other. In this case, the interference function takes the form $\mu(\vec{x}) = \abs{x_1-x_2}$.
\end{example}
\begin{example}
When $\vec{x}$ has more than two components, there are may possible ways to generalize the above example. We distinguish two general modes of interference: exclusive and partial. Exclusive interference mode represents the case where simultaneous presence of all types of information is necessary for destructive interference. For example, if each information type carries the same weight, the interference function is:
\begin{equation}\label{eq:interf1}
\mu(\vec{x}) =  \sum_{k} \left(x_k - \nu\right),
\end{equation}
where $\displaystyle\nu = \min_{k} x_k$.
\end{example}

\begin{example}
We call the partial interference the case where presence of all types of information is not necessary. It can be modeled in many ways depending on the desired interpretation. For example, if there are three sources, we can use complex numbers to set $\mu$ so that
 \begin{equation}
\mu(\vec{x}) = \abs{\sum_{k=1}^3  x_k\exp\left(\frac{\iota k\pi}{3} \right)}, 
\end{equation}
where $\iota$ denotes the imaginary unit. In this case, some content is lost when any two types of signal are present but all three must be present for complete annihilation.
\end{example}

Given the interference function $\mu$, define the \emph{interference strength function} $\psi:\R^n\to\R\cup\{\infty\}$ by
\begin{equation}\label{eq:interfstr1}
\psi(\vec{x}) = \begin{cases} \norm{\vec{x}}_1 \log\left(\frac{\norm{\vec{x}}_1}{\mu(\vec{x})}\right) & \text{if $\norm{\vec{x}}_1>0$,}\\
0 & \text{if $\norm{\vec{x}}_1=0$.}
\end{cases}
\end{equation}
By the definition of $\mu$

Since $0\leq \mu(\vec{x})\leq \norm{\vec{x}}_1$, it follows that $\psi$ takes non-negative values (including $+\infty$). The value of $\psi$ is infinite if $\mu(\vec{x})=0$ (perfect interference) and finite otherwise. For an $m\times n$ matrix $\mat{X}$ define the vector $\bsigma(\mat{X})$ of length $n$ having the components
\begin{equation}\label{eq:interfstr2}
\sigma_i(\mat{X}) = \psi(\mat{X}\vec{e}_i)
\end{equation}
(recall that $\vec{e}_i$ is the standard column basis vector and hence $\mat{X}\vec{e}_i$ represents the $i$-th column of $\mat{X}$). We will call $\bsigma$ the \emph{interference strength vector}.

For our applications, the entries of the matrix $\mat{X}$ above are interpreted as information contents over some graph: $X_{ij}$ is the the content of type $i$ at the vertex $j$. For each node $j$, the $\ell_1$-norm in Equation (\ref{eq:interfstr1}) can be interpreted in this context as the total information content at $j$ and the value of $\mu$ applied to the $j$-th column of $\mat{X}$ as the information content remaining after interference. Hence, interference strength of each node measures how much information content was lost by interference, adjusted by the node's joint information content.

The matrix $\mat{\tilde{H}}$ is therefore a natural input to $\psi$ and $\bsigma$, however other derived matrices can be used such as $\mat{\tilde{H}}$ adjusted for source strength by multiplying each row by its corresponding source strength $\xi_i$. Furthermore, rows of $\mat{X}$ can come from different $\mat{\tilde{H}}$ matrices, using different potentials or dissipation coefficients, as long as the underlying vertex set is the same. The general purpose of interference strength is to measure the amount of interaction or overlap between different ITMs.

\section{Biological Examples}
\label{sec:methods}
\label{sec:examples}

The theory and methods outlined in previous sections can be applied to any interaction network. This section will present some examples using biological networks, more specifically, yeast protein-protein interaction networks. Since the interaction data obtained using many high-throughput methods is generally inconsistent \citep{SSM03}, we use the core yeast dataset from DIP, version ScereCR20060402, consisting of 2554 proteins and 5952 interactions for all our examples. The core dataset, obtained using the methods of  \citet{DSXE02}, contains only the most reliable interactions from the DIP dataset of all yeast protein-protein interactions. 

Our examples are restricted to investigation of information transduction  modules related to yeast histone acetyltransferases (HATs). Histones are nuclear proteins that are major components of eukaryotic chromatin \citep{Wolffe92}: eukaryotic DNA is organized as a repeating array of nucleosomes consisting of 146 bp of DNA wound around a histone octamer consisting of two of each of histone proteins H2A (Hta1, Hta2 in yeast), H2B (Htb1, Htb2 in yeast), H3 (Hht1, Hht2 in yeast) and H4 (Hhf1, Hhf2 in yeast). It has been repeatedly demonstrated that transcription is strongly influenced by the chromatin structure and DNA-histone interactions in particular. The regions of DNA that interact with histones are generally unavailable for transcription and transcriptional activation and deactivation are connected with chromatin alterations \citep{Wolffe01}.

Histone acetyltransferases are enzymes that acetylate histones, leading to weakening of the nucleosome structure and making the DNA involved accessible to transcription factors \citep{Struhl98,WK98}. \textit{Saccharomyces cerevisiae} contains several HATs from two major classes with a variety of biological functions and substrate specificities \citep{SB00}. The proteins Hat1, Gcn5, Elp3, Spt10 and Hpa2 belong to the GNAT superfamily \citep{NL97}, while Esa1, Sas2 and Sas3 belong to the MYST family \citep{BSABBCCDDFHMVWH96,SEGSPZCLA98}. The proteins TAF1 (TATA-binding protein associated factor), a subunit of the TFIID complex, and Nut1 (Med5), a subunit of the mediator complex \citep{BY05}, have also been associated with histone acetyltransferase activity \citep{MYKBBOWWBKNA96,LBGMK00}.

Unfortunately, the core dataset does not contain the relevant data for all known HATs. The HATs Hpa2 and Spt10 are not present in the core while HAT1 has interactions only with Hat2 and its substrate Hhf2. We chose to primarily concentrate on HATs Gcn5, Esa1 and Elp3 because they are well researched and the interaction data is abundant. They are all involved in transcriptional activation, unlike Sas2, which promotes silencing \citep{OSMBYSW01}. 

Gcn5 is the best characterized of all HATs, preferentially acetylating histone H3 \citep{SS99}. It forms the catalytic subunit of the ADA and SAGA transcriptional activation complexes \citep{GDCRBCOOAWBW97}. In addition to Gcn5, the SAGA complex also contains the proteins Tra1, TAF5, TAF6, TAF9, TAF10, TAF12, Hfi1 (Ada1), Ada2, Ngg1 (Ada3), Spt3, Spt7, Spt8 and Spt20 (Ada5) \citep{TT05}. The ADA complex contains a subset of proteins from the SAGA complex, namely Gcn5, Hfi1, Ada2, Ngg1 and Spt20, plus the adaptor protein Ahc1 \citep{ESSHYBW99}. The TAF proteins in SAGA also belong to the TFIID complex, which overall consists of 15 subunits including a TATA-binding protein and 14 TAFs \citep{SW00}. 

Esa1 is the catalytic subunit of the NuA4 histone acetyltransferase complex essential for growth in yeast \citep{SEGSPZCLA98,AUSCGBPWC99} that catalyses acetlyaltion of the histone H4. It has been established that the NuA4 complex, containing, in addition to Esa1, the proteins Tra1, Epl1 Yng2, Eaf1, Eaf2, Eaf3, Eaf5, Eaf6, Act1, Arp4 and Yaf9, is recruited by a variety of transcriptional complexes as a transcriptional coactivator and is involved in DNA repair \citep{DC04}. 

Elp3 is a part of the six component elongator complex , which is associated with RNA polymerase II during transcript elongation \citep{WOBFEOLATS99}. The elongator complex also includes the proteins Iki3 (Elp1), Elp2--4, Iki1 (Elp5) and Elp6 \citep{KG01}. 

This section contains four examples of the application of our models, depicted in Figures \ref{fig:fig1}--\ref{fig:emitting2b}. Subsection \ref{subsec:absorbing} describes possible complexes associated with the HATs Gcn5, Esa1 and Elp3, taken individually and in competition, that can be inferred from the protein-protein interaction network using the absorbing model. Subsection \ref{subsec:emitting} investigates possible physical interaction interfaces between the MADS box protein Mcm1 \citep{SS95} and the HATs Esa1 and Gcn5. In this case, the emitting model is employed to discover the pathways through which Mcm1 can recruit the above HATs and whether they are recruited through the same interface. Before presenting our results we describe the model parameters and computational techniques used.

\subsection{Parameters and computation}

\subsubsection{Dissipation}

For all our examples, we set $\alpha_i=1$ for every node $i$ in our interaction network so that the outgoing flow from any node is not dissipated. Modeling the incoming dissipation the coefficients $\beta_i$ can take two values: one for `ordinary' and one for \emph{evaporating} vertices. In our examples that use the absorbing model (\ref{subsec:absorbing}), $\beta_i$ is set to $0.70$ for ordinary nodes and $0.01$ for evaporating nodes while the examples using the emitting model (\ref{subsec:emitting}) set 
$0.87$ for ordinary nodes and $0.01$ for evaporating nodes. The evaporating nodes consisted of cytoskeleton proteins Act1, Myo1, Myo2, Myo3, Myo4, Myo5, Smy1, Smy2, Sla1, Arc40, Arp2, Rvs167, Tpm1, Tpm2, Aip1 and Las17 and histones (Hta1, Hta2, Htb1, Htb2, Hht1, Hhf2, Htz1, Hho1).

The coefficients for the ordinary nodes were chosen using the following reasoning. For the emitting model we considered the dissipation rate that would allow the random walk emitted from the source to reach an `average' node along the shortest path to it with the probability slightly less than $0.5$, say $0.49$. We found that the average length of the shortest path between two points in the yeast core dataset is $5.23$ and hence our coefficient is $0.49^{(1/5.23)}=0.872$, which is rounded to $0.87$. A different coefficient was needed for the absorbing examples because we were interested in only the immediate complexes containing our selected HATs: the coefficient $\beta_i=0.87$ would lead to most of the members of the RNA polymerase II holoenzyme to be retrieved as members of the resulting ITM. We chose to consider the shortest paths of length $2$, rather than of the average length $5.23$. Using the same calculation as above, we obtain $0.49^{(1/2)}=0.7$.

The reason for having evaporating nodes with larger dissipation rate is that both the cytoskeleton proteins and the histones form extended structures in the cell and the nucleus, respectively. In our physical interaction network, we assume that information can flow from one protein to another through an intermediate node if all three nodes are brought close together in space and time. Information is not likely to flow through proteins that are parts of extended structures because proteins with completely different biological function may bind them at different locations and at different times. Therefore, 
allowing significant information flow through such nodes would yield biologically implausible results. 

However, depending on the exact context of the investigation, such nodes may have an important role to play and removing them completely from the interaction networks or assigning them to the boundary set $S$ would not be appropriate. Hence, we set a very high incoming dissipation rate at evaporating nodes while allowing the information to originate from them. In terms of our models, this approach means that the evaporating nodes will have very small visiting times in the emitting models and hence will not be components of any ITM. On the other hand, depending on the exact network topology, they may be part of ITMs obtained by the emitting model. Note that other proteins that bind their interacting partners in a non space and time specific manner can be chosen as additional evaporating nodes; we chose histones and cytoskeleton proteins due to their direct relevance to our selected examples.

\subsubsection{Potentials}

All our examples use attracting potentials centered at each pseudosink or sink. The potential function, heuristic in nature, is the same in every example has the the form
\begin{equation}\label{eq:pseudosinkpt2}
\theta_k(x) = \begin{cases} a_1x & \text{if $0<x\leq b$,}\\  a_1x +  a_2(x-b)^2 & \text{if $x>b$,}  \end{cases}
\end{equation}
where $a_1=0.8181$, $a_2= 0.05$, $b=2$ and $k$ is any pseudosink or a sink. The potential function shown above is long-range, affecting the whole graph, with a linear portion for short ranges $0\leq x\leq 2$ and quadratic for distances larger than $2$. We do not expect to see qualitative changes in the results if the form of the potential function is modified as long as it has the effect of attracting information towards the destination.

The sources (in the case of emitting models) and evaporating points were excluded from the graph prior to calculating distances (their distances from the centers were set to an arbitrary large number) in order to exclude the paths passing through them from consideration. The reason for excluding the paths passing through sources was that, by construction, the information never enters a source from a transient vertex, while the evaporating points were excluded because most of the signal entering them is dissipated. 

\subsubsection{Numerical implementation}

The code for computation of the results was implemented in the Python programming language, using the NumPy and SciPy packages \citep{JEP01}. In particular, the computation of the matrices $\mat{\tilde{F}}$ and $\mat{\tilde{G}}$ (Equations (\ref{eq:dissipate1}--\ref{eq:dissipate2})) was performed by the embedded FORTRAN code from the UMFPACK \citep{Davis04} solver of sparse systems of linear equations, using the Automatically Tuned Linear Algebra Software (ATLAS) \citep{WP05} implementation of Basic Linear Algebra Subprograms (BLAS). The graphical representations of the subgraphs of interest were produced by the \textit{neato} program from the Graphviz graph visualization suite \citep{GN00}.

\subsection{HAT complexes: absorbing examples}\label{subsec:absorbing}

\begin{figure}
\begin{center}
\begin{tabular}[t]{lr}
\textbf{(a)} & \\
& \includegraphics[scale=0.70]{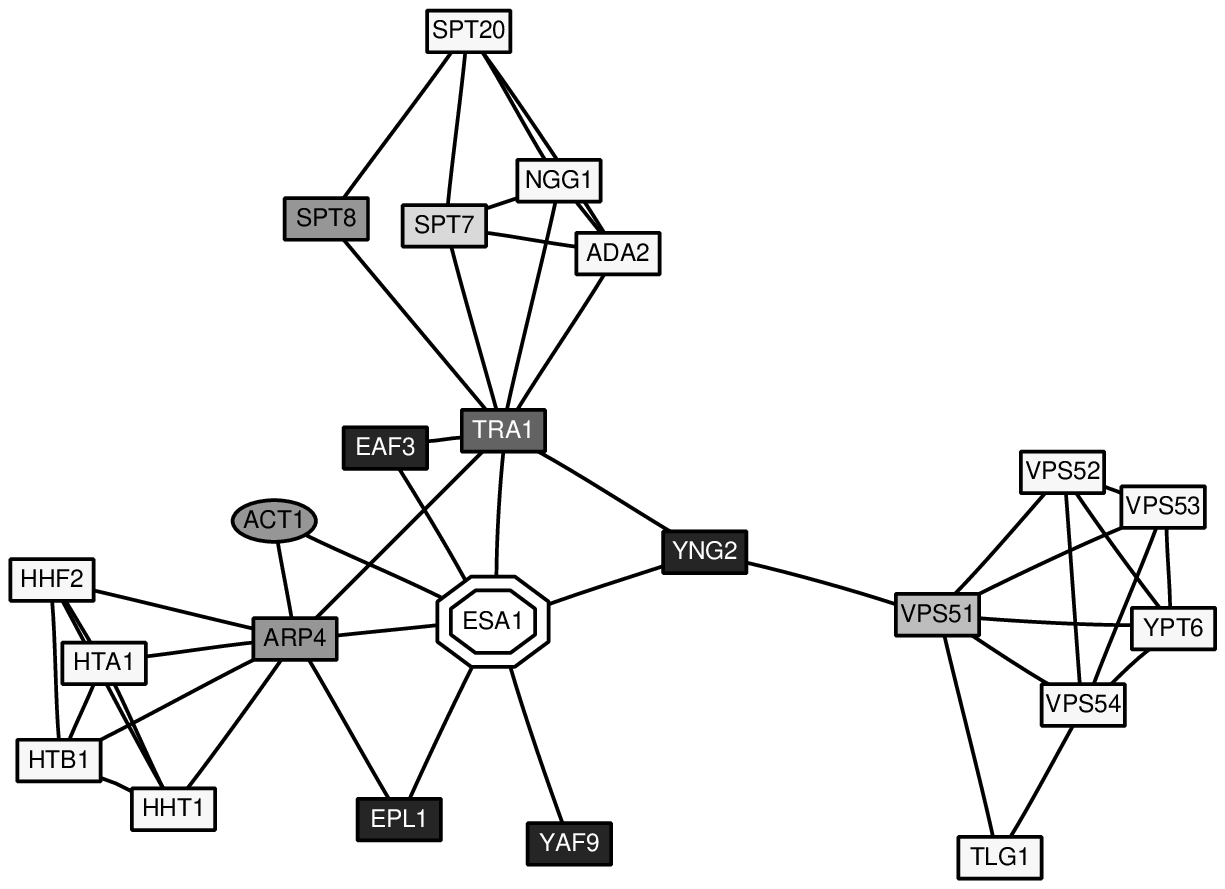} \\
\textbf{(b)} & \\
& \includegraphics[scale=0.70]{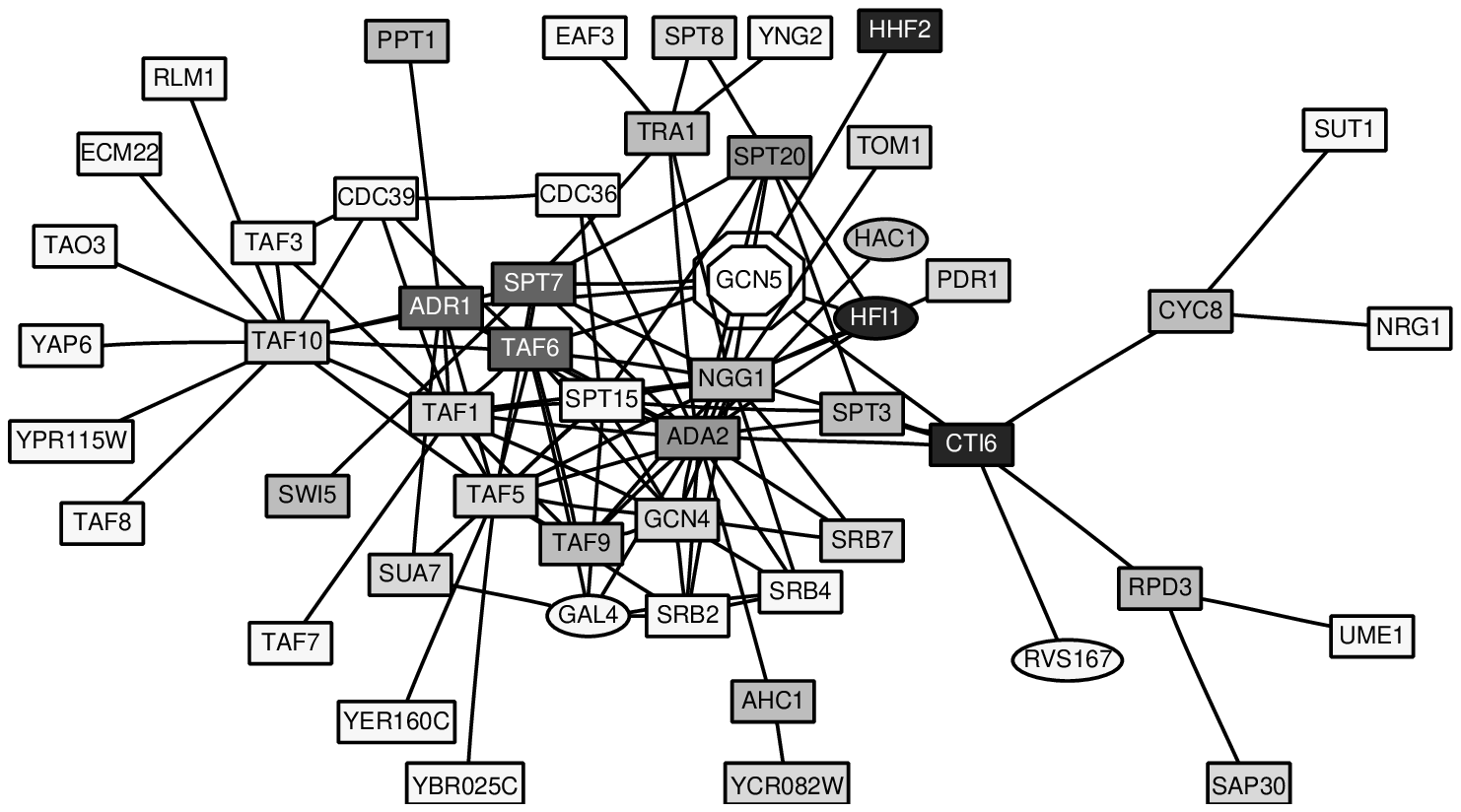} \\
\textbf{(c)} & \\
&\includegraphics[scale=0.70]{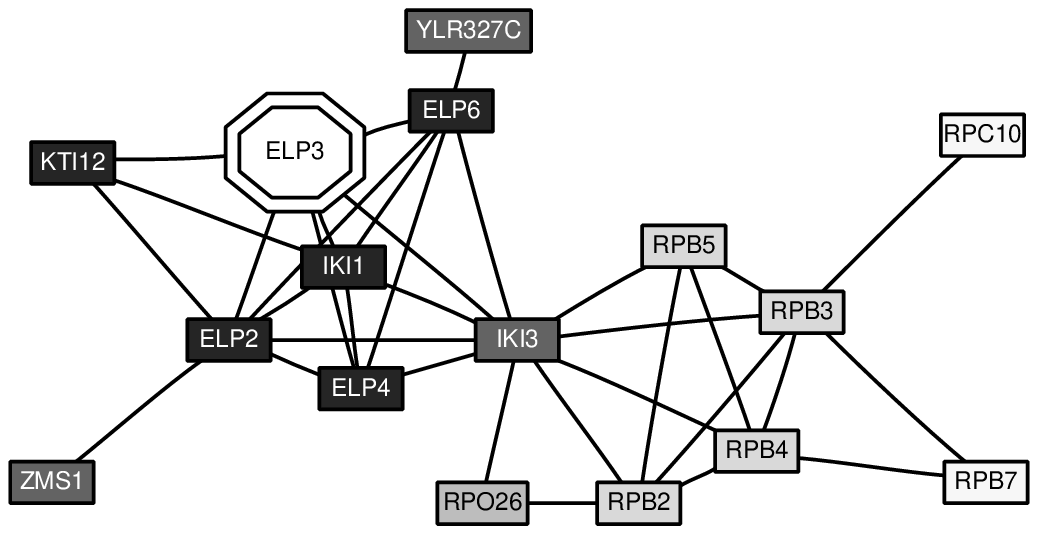} \\
\end{tabular}
\end{center}
\caption[ITMs obtained by running the absorbing model with Esa1(a), Gcn5(b) and Elp3(c) as a sink.]{ITMs obtained by running the absorbing model with Esa1(a), Gcn5(b) and Elp3(c) as a sink. The shades of grey at the nodes represent the probability of the information originating at the corresponding protein being absorbed at the sink, the darker nodes indicating higher probability.}\label{fig:fig1}
\end{figure}

Figure \ref{fig:fig1} shows the three subgraphs of the yeast core interaction graph consisting of the top scoring nodes according to the absorbing model with Esa1, Gcn5 and Elp3 as single sinks, respectively. The information orginating at the proteins shown has more than $0.07$ probability of being absorbed by the sink (under the influence of the potential centered at the sink) as opposed to being dissipated. Hence, the subgraphs show the proteins that are likely to be in the same complex with the HATs chosen as sinks.

Figure \ref{fig:fig1}(a), with Esa1 as the sink, shows all the proteins from the NuA4 complex that are available in the core dataset as highly significant. Some of the proteins from ADA and SAGA complexes can also be seen because Tra1 belongs to these complexes as well as to NuA4. The four types of histones forming the histone octamer can also be seen interacting with Arp4. The proteins Vps51--54 on the right of  Figure \ref{fig:fig1}(a) belong to the Vps Fifty-three thethering (VFT) complex, involved in vesicle assembly \citep{RWSSK03}. The proteins Tlg1 and Ypt6 are interacting partners of the VFT complex \citep{RWSSK03}. The relation between VFT and NuA4 is not established as these two complexes are localized in different cellular compartments: NuA4 in the nucleus and VFT in golgi-vacuole transport vesicles. The relationship observed in Figure \ref{fig:fig1}(a) results exclusively from the Yng2--Vps51 interaction, which was orginally observed in a yeast-two-hybrid screen by \citet{ITMOCNYKS00,ICOYHS01}. Based on the above information, it appears that VFT and NuA4 complexes do not interact \emph{in vivo}. Note that the histones as well as actin, although selected as evaporating points, can be seen in the figure because the outgoing flow from evaporating nodes is allowed. 

In a similar fashion, Figure \ref{fig:fig1}(b), with Gcn5 as the sink, shows the members of SAGA, ADA and TFIID transcriptional activator complexes as well as many other transcription factors, mostly members of subcomplexes of the RNA polymerase II holoenzyme. Also worth mentioning is Cti6, which bridges the Cyc8-Tup1 corepressor and the SAGA coactivator to overcome repression of the GAL1 gene \citep{PPKTT02}. The Cyc8 protein is also shown while Tup1 is not, most likely because it is involved in many other interactions  away from Gcn5, bringing down its relative significance. Figure \ref{fig:fig1}(c), with Elp3 as the sink, clearly outlines the elongator complex, as well as some members of the core RNA polymerase II complex (Rbp2--5, Rbp7, Rpc10, Rpo26) \citep{MY98}.

Figure \ref{fig:fig2} shows the top scoring nodes according to the absorbing model with Esa1, Gcn5 and Elp3 as simultaneous sinks with attracting potentials. In this case, the information originating at the depicted nodes has more than $0.05$ total probability of being absorbed by any of the sinks as opposed to being dissipated. 

Fewer nodes can be seen in this figure as compared to Figure \ref{fig:fig1} because the three attracting potentials are now involved that may cancel each other out. It can be seen that the elongator complex centered around Elp3 is not connected to the subgraph around Esa1 and Gcn5. Although all of the NuA4, SAGA, ADA and elongator complexes belong to the RNA polymerase II holoenzyme, they do so at different times. The NuA4, ADA and SAGA complexes have a role in initiation of transcription while the elongator complex is involved in transcript elongation \citep{Martinez02}. The green (mixture of cyan and yellow) color of Tra1 is indicative of the fact that it is a subunit of both Esa1-containing NuA4 complex and the Gcn5-containing SAGA complex.

\begin{figure}
\begin{center}
\includegraphics[scale=0.7]{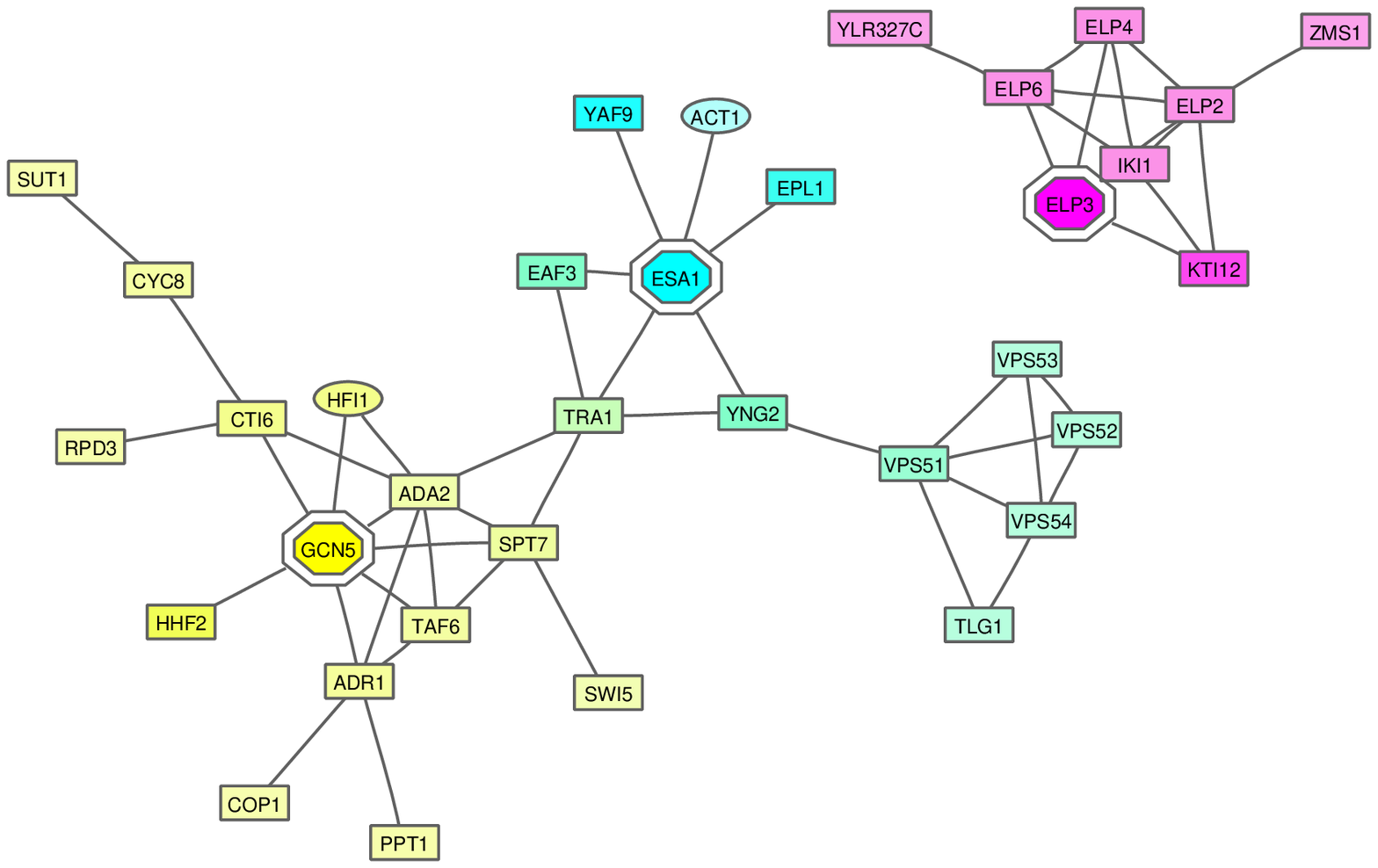}
\end{center}
\caption[ITM obtained by running the absorbing model with Esa1, Gcn5 and Elp3 as simultaneous sinks.]{ITM obtained by running the absorbing model with Esa1, Gcn5 and Elp3 as simultaneous sinks. The strength of each of cyan, yellow and magenta color component of the node shows the square root of the probability of absorption at Esa1, Gcn5 and Elp3, respectively.}\label{fig:fig2}
\end{figure}

\subsection{Transcription factor interaction interfaces; emitting examples}\label{subsec:emitting}

Mcm1 is a yeast transcription factor essential for cell viability. It controls many cellular functions including cell cycle transition \citep{ASWN95}, mating \citep{MBGSAV02} and arginine metabolism \citep{MD93}, through interactions with different cofactors. It has been determined that Mcm1 acts both as an activator and a repressor of transcription \citep{BHS92,MD93} and here we explore the possible ways it can interact with the NuA4 and SAGA HAT complexes.

\begin{figure}
\begin{center}
\begin{tabular}[t]{lr}
\textbf{(a)} & \\
& \includegraphics[scale=0.7]{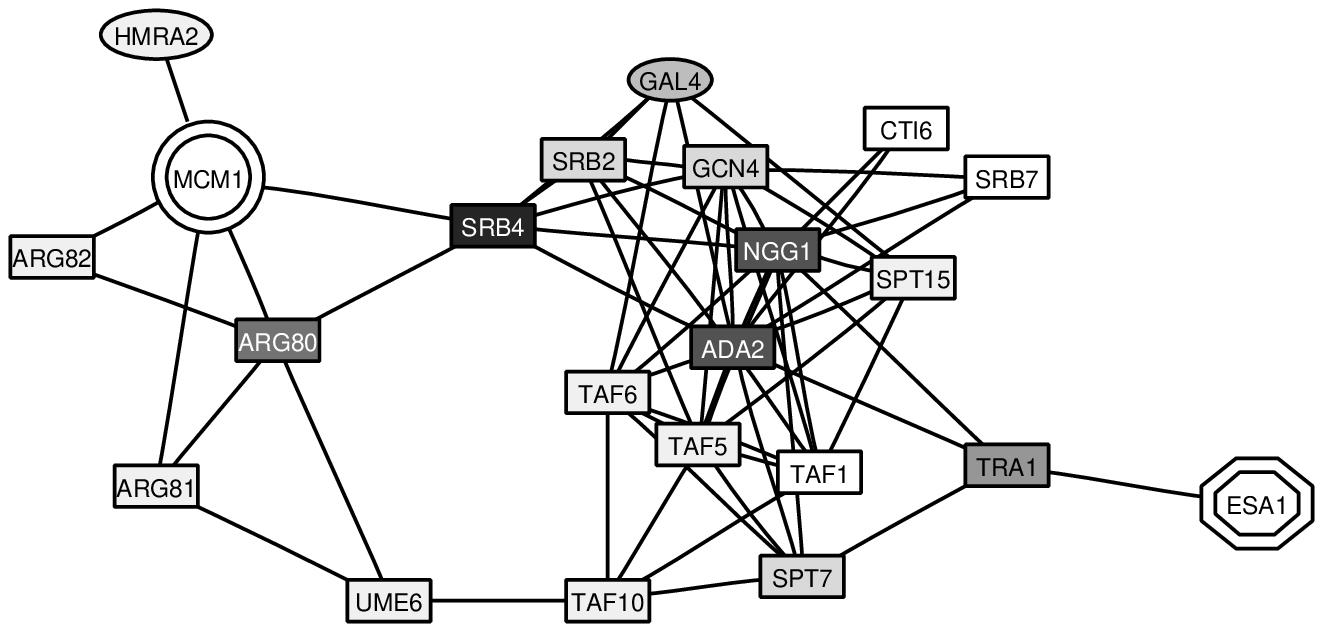} \\
\textbf{(b)} & \\
& \includegraphics[scale=0.7]{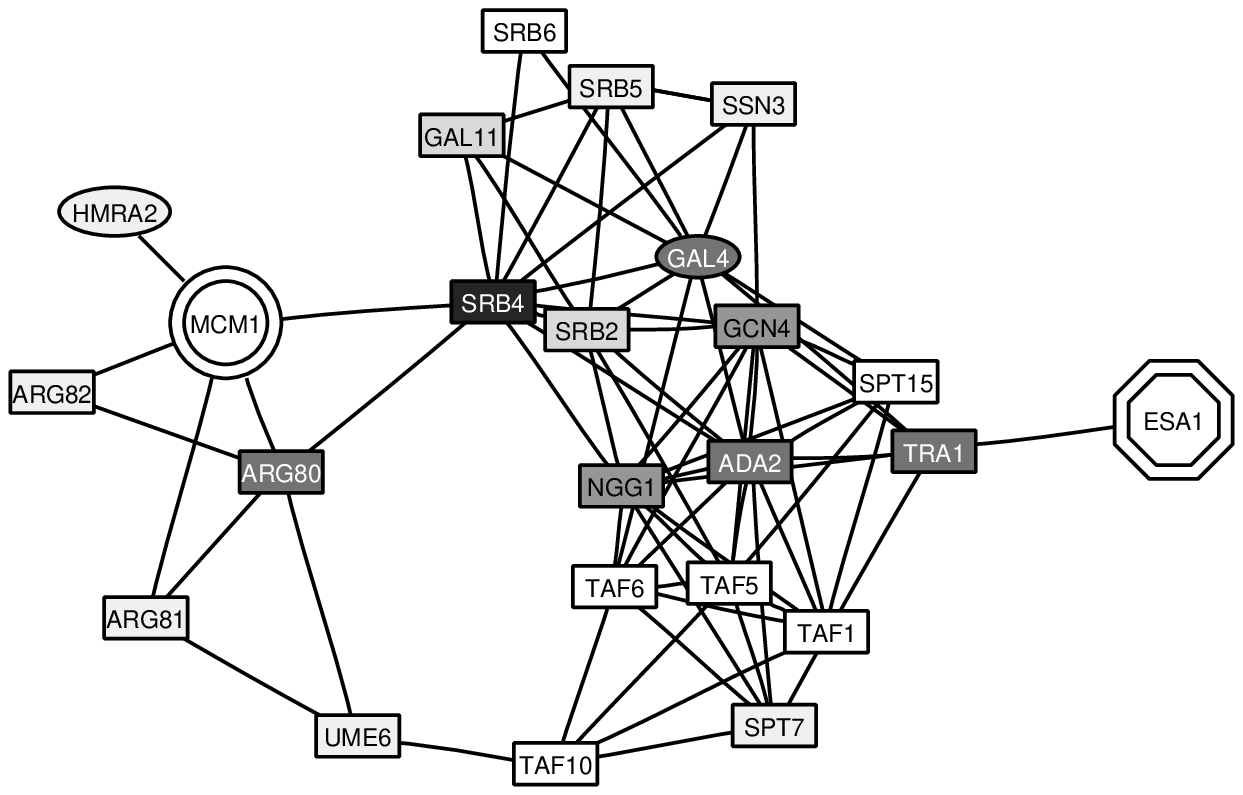} \\
\end{tabular}
\end{center}
\caption[ITMs resulting from the emitting model with Mcm1 as a source and Esa1 as a pseudosink.]{ITMs resulting from the emitting model with Mcm1 as a source and Esa1 as a pseudosink using the original yeast core dataset (a) and the modified dataset additionally including the edges Tra1--Gal4 and Tra1--Gcn4 (b). The proteins containing the largest amounts of deposited information are shown, with the information content indicated by shading (darkest nodes contain the most information).}\label{fig:emitting1}
\end{figure}

Figure \ref{fig:emitting1}(a) shows the subgraph consisting of the $22$ proteins with the largest deposited information content obtained by running our emitting model with Mcm1 as a source and Esa1 as a pseudosink. The number of proteins to display ($20$ plus the source and the pseudosink) was chosen because the participation ratio for the information content vector (excluding the source and the pseudosink) was $20.33$. 

The ITM shown in Figure \ref{fig:emitting1}(a) gives the likely pathways of physical interaction from Mcm1 to Esa1, according to the yeast core interaction dataset. It can be immediately observed that Esa1 is reached solely through Tra1, which is known to be the general interaction domain of both NuA4 and SAGA HAT complexes \citep{AUSCGBPWC99,GSPY98}. Directly associated with Mcm1 are the proteins Arg80--Arg82, belonging to the ArgR complex involved in regulation of arginine metabolism \citep{DM91}. The majority of the ITM is dominated by the members of the SRB mediator subcomplex of the RNA polymerase II holoenzyme (Srb2, Srb4, Srb7) \citep{BY05} and the TFIID, SAGA and ADA complexes. Also prominent are transcriptional activators Gal4 and Gcn4 \citep{Hinnebusch05,TJS06}. 

The subgraph image suggests two possible interaction pathways: the main (based on the intensities of deposited information) through Srb4 and members of SAGA/ADA complex and the alternative through Ume6--TAF10--Spt7. Ume6 is a DNA binding protein that acts as a transcriptional repressor by recruiting histone deacetylases, which have the catalytic activity opposite to the HATs \citep{KABR03}. While  simultaneous existence of activating and repressing pathways is biologically plausible, we do not anticipate both pathways to be in action at the same time. On the other hand, interaction of Mcm1 with the NuA4 through any of the above pathways \emph{in vivo} is doubtful because both pathways lead through the interacting partners of Tra1 in the SAGA complex that are not associated with it in the NuA4 complex \citep{DC04,TT05}. Note that the direct physical interaction of the ArgR/Mcm1 complex and the SAGA complex was hypothesized by \citet{RGB02} in relation to regulation of arginine metabolism.

Nevertheless, it is likely that the yeast core dataset does not contain all the interactions of Tra1 and that the interactions not in the dataset may provide us with the plausible explanation. \citet{BHSA01} have indicated that HAT complexes are recruited through Tra1 by Gal4 and Gcn4 transcriptional activators. To investigate if adding the implied edges would significantly change the resulting ITM we added the Gcn4--Tra1 and Gal4--Tra1 links to the core dataset and rerun the emitting model with all other parameters unchanged. The resulting ITM, with participation ratio of $21.66$, is shown in Figure \ref{fig:emitting1}(b). We observe few changes: the proteins Ssn3, Srb5, Srb6 and Gal11, belonging to the mediator complex, replaced Cti6 and Srb7, thus placing more emphasis to the mediator complex. 

In this example, our emitting model appears to be quite robust to changes in the pseudosink leakage parameter $\gamma$. Using the original core dataset, in addition to the original run with $\gamma=0.3$, we ran our model with $\gamma=0$, $\gamma=0.5$ and $\gamma=1$, obtaining participation ratios of $19.43$, $20.34$ and $20.75$ and very little change in constitution of the ITMs. For example, when $\gamma=1$, the new ITM contains the NuA4 proteins Arp4 and Yng2 in the place of Cti6 and Srb7. Hence, larger pseudosink leakage coefficient allows exploration of the nodes surrounding the pseudosinks without affecting the remainder of the ITM in a major way. Such exploration is very desirable for protein-protein interaction networks because it reveals more of the complexes around pseudosinks, thus giving some of the characteristics of the absorbing model to the emitting model. Note that many of the interacting partners of the sources are found in the ITM solely due to proximity of the source.

\begin{figure}
\begin{center}
\begin{tabular}[t]{lr}
\textbf{(a)} & \\
& \includegraphics[scale=0.7]{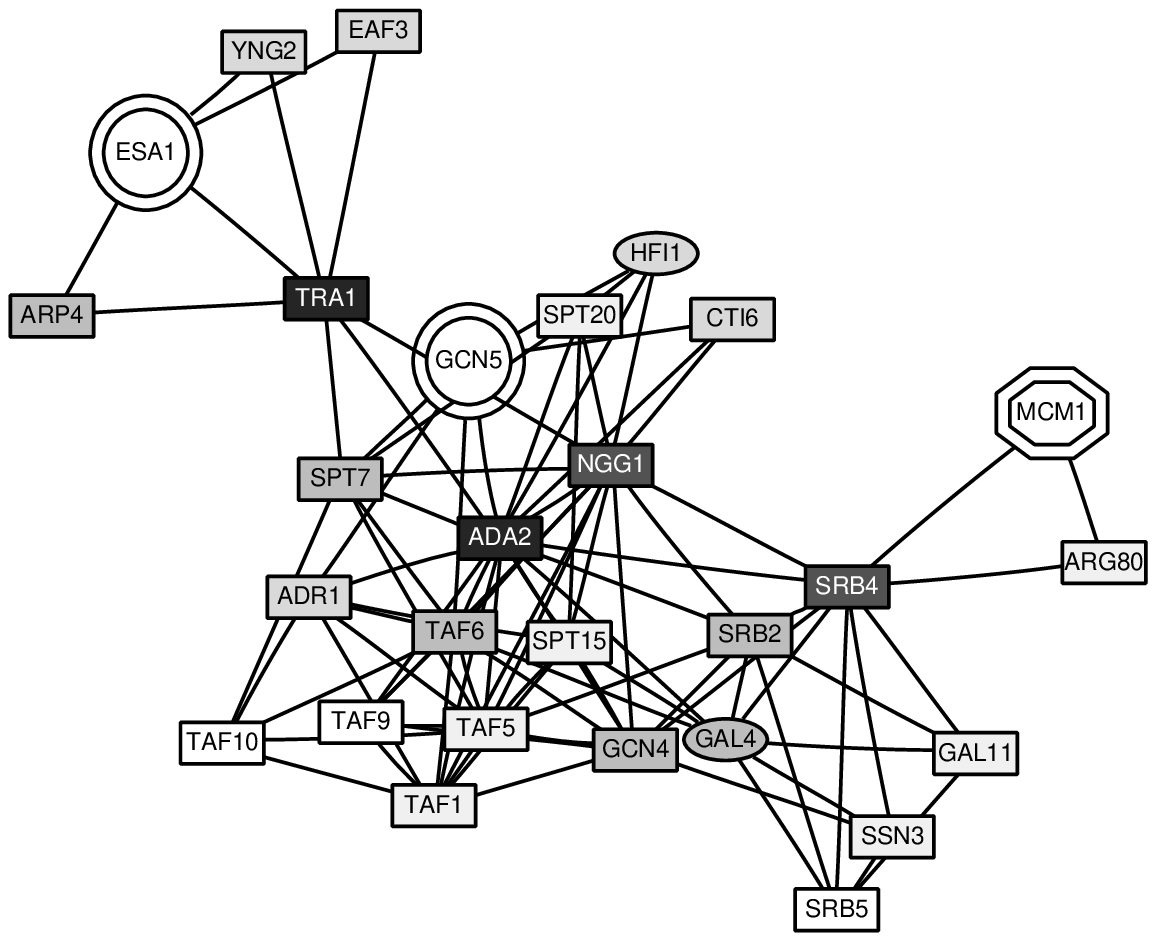} \\
 \textbf{(b)} & \\
& \includegraphics[scale=0.7]{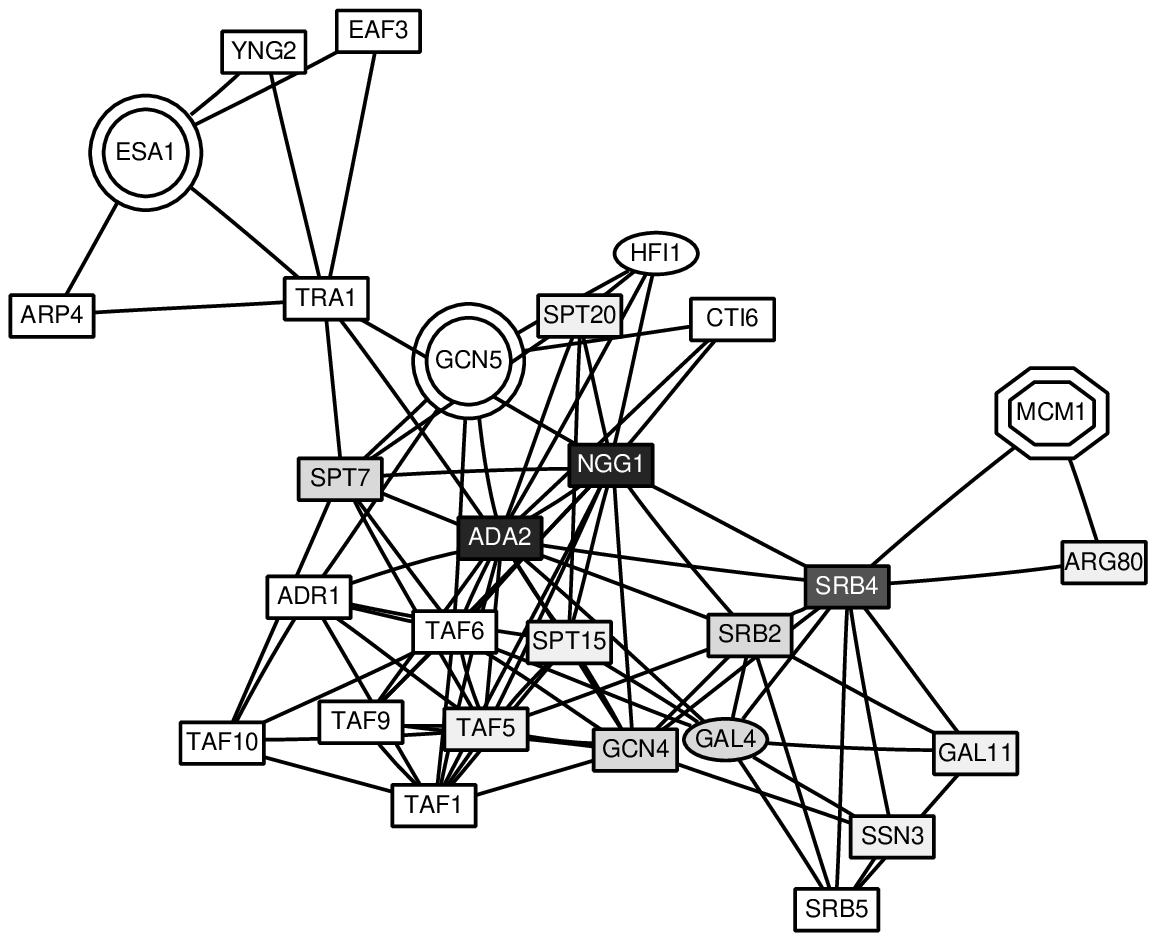} \\
\end{tabular}
\end{center}
\caption[ITM resulting from the emitting model with Esa1 and Gcn5 as sources and Mcm1 as a pseudosink.]{ITM resulting from the emitting model with Esa1 and Gcn5 as sources and Mcm1 as a pseudosink: (a) information content, (b) interference strength.}\label{fig:emitting2a}
\end{figure}

To explore the extend the HATs Esa1 and Gcn5 share their interaction interface with Mcm1 we set Esa1 and Gcn5 as sources and Mcm1 as a pseudosink destination. Figure \ref{fig:emitting2a} shows the ITM based on the total information content (participation ratio $24.62$, 28 nodes shown), with the nodes shaded according to total content and interference strength. The proteins shown as nodes in Figure \ref{fig:emitting2a} have appeared in one of the previous figures, mostly forming parts of NuA4, SAGA/ADA, TFIID and mediator complexes. The nodes with the largest total content are Tra1, Ada2, Ngg1 and Srb4 and the latter three are also the nodes with by far the largest interference strength. This fact does not surprise us because although Tra1 is a member of both NuA4 and SAGA complexes, information flowing from Gcn5 to Mcm1 largely avoids it. 

The paths used by the information emitted from Esa1 and Gcn5 separately can best be seen in a color figure (Figure \ref{fig:emitting2b}(a)) where the information content from Esa1 and Gcn5 is shown as cyan and yellow, respectively. The nodes colored strongly cyan contain mostly information from Esa1 while those colored yellow contain mostly the information from Gcn5. The nodes colored green contain information from both sources. In this way it can be observed that members of NuA4 contain the information solely from Esa1, some SAGA proteins contain the information solely from Gcn5, while Ada2, Ngg1 and Srb4 contain a significant amount of information from both sources.

\begin{figure}
\begin{center}
\begin{tabular}[t]{lr}
\textbf{(a)} & \\
& \includegraphics[scale=0.7]{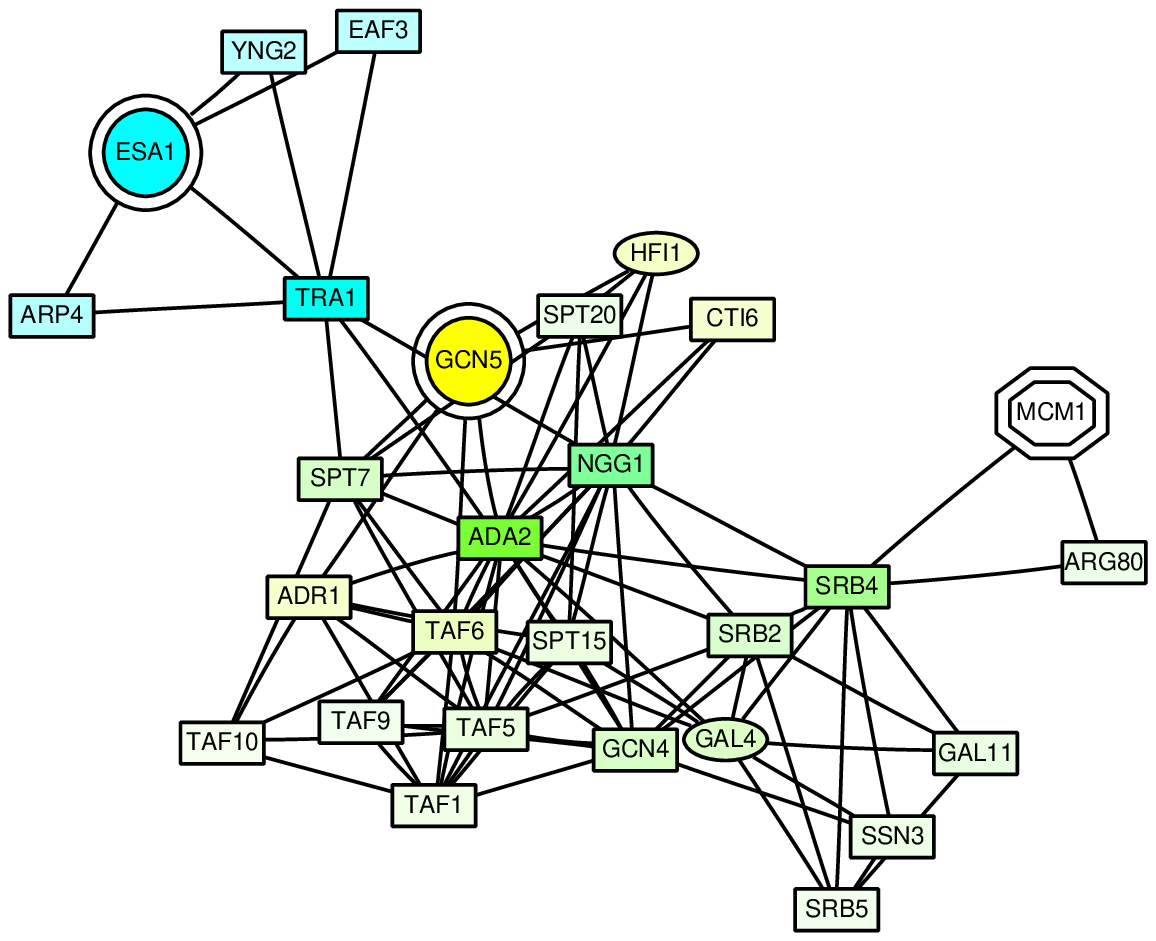} \\
 \textbf{(b)} & \\
& \includegraphics[scale=0.7]{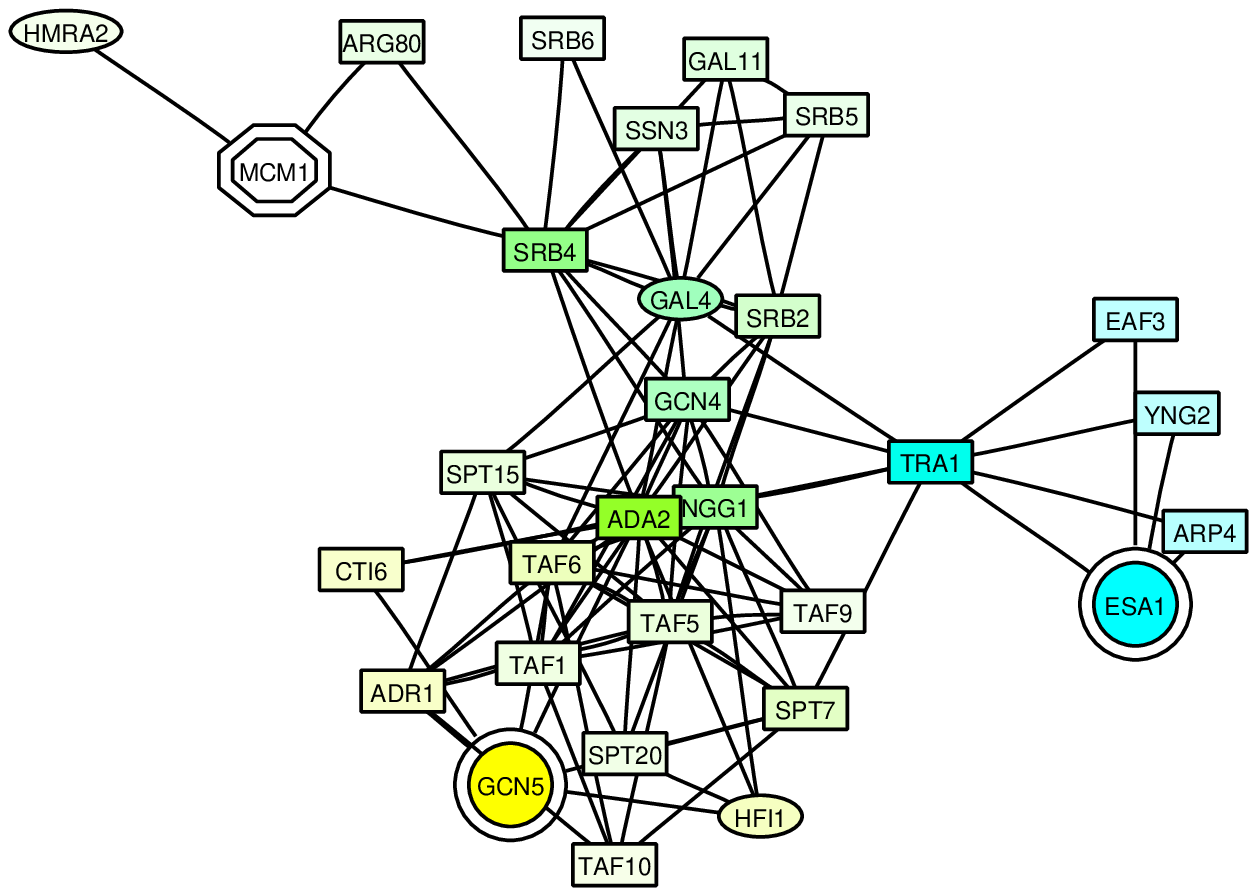} \\
\end{tabular}
\end{center}
\caption[Information content of members of the ITM arising from the emitting model with Esa1 and Gcn5 as sources and Mcm1 as a pseudosink.]{Information content of members of the ITM arising from the emitting model with Esa1 and Gcn5 as sources and Mcm1 as a pseudosink: (a) using the yeast core dataset; (b) using the modified dataset additionally including the edges Tra1--Gal4 and Tra1--Gcn4. The strength of the cyan and yellow color component of the node corresponds to the information content originating from Esa1 and Gcn5, respectively.}\label{fig:emitting2b}
\end{figure}

Using additional links based on Brown \emph{et al.} (Figure \ref{fig:emitting2b}(b)) produces effects similar to Figure \ref{fig:emitting1}(b): the common interface through the mediator complex is emphasized at the expense of the paths through the SAGA complex. For example, note the difference in color of Spt7, Gcn4 and Gal4 between Figure \ref{fig:emitting2b}(a) and Figure \ref{fig:emitting2b}(b). The common interface through the mediator complex appears biologically more plausible than directly through members of the SAGA complex but we are as yet unable to find direct evidence in the literature confirming either possibility.

\section{Discussion and conclusion}

The proposed information diffusion models appear to capture some of the essential features of the yeast protein-protein interaction network in our examples. Our absorbing model performed well in identifying complexes related to sinks while the emitting model with pseudosinks is able to illuminate the possible interaction interfaces between sources and pseudosinks. Application of the concept of destructive interference in this context provides a way to assess the degree of overlap of different ITMs.

The salient feature of our models is a novel use of attraction potentials and dissipation. While the entries of the Green's function can be interpreted in graph-theoretic terms as sums of weights of paths from a source to a transient vertex (for the emitting model) or from a transient vertex to a sink (for the absorbing model), the potentials, together with the choice of boundary, provide a unique context for information diffusion in the network. The weights of the edges and hence the nature of the underlying graphs are changed every time a different potential is applied, thus bringing forward different aspects of the network. The potential function used for our examples was heuristic in nature and we hope that our work would generate interest in developing theoretical foundations for directed information propagation through networks.

Dissipation coefficients provide a natural and extremely flexible way of controlling the spread of information content through the network. While \citet{GN02} proposed a similar formulation for penalizing longer paths connecting two nodes in a network, they did so in the context of hierarchical clustering and using a single dissipation rate. Node specific dissipation rates are important because they allow construction of `evaporating nodes' and possible integration of additional information to our model. Having the dissipation rates dependent on the environment of the node may lead to a more sophisticated model of information transduction.

When modelling physical cellular protein networks, the main limitation of our approach is that the the publicly available representations of protein-protein interaction networks contain a limited amount of information. Each interaction is shown as either occurring or not occurring, without reference to the dynamics, time-scale, or specificity of binding. Furthermore, the spatial location of the interactions on the protein molecules is not available, so that it cannot be determined if a protein known to belong to two separate complexes, such as Tra1 in our examples, can belong to both at the same time and therefore transmit information between them. Therefore, our model of protein cellular networks is only metaphorical at this stage. However, our diffusion paradigm can be adapted to account for additional information about proteins, such as their concentrations, cellular compartment localizations, post-translational modifications or rate constants for binding interactions, as it becomes available. One way to do that is to associate each protein to a vector instead of a scalar value and to construct an evolution operator that reflects the nature of the additional information. In such circusmstances, the dynamics of information flow could be as revealing as the steady state we use at this stage. 

The quality of the interaction dataset also has a strong influence to the outcomes of our models. Addition or deletion of edges may make the results more realistic, as in our emitting examples, but also may completely alter the ITM produced, if a particular edge provides a shortcut towards the destination. Hence, in order to obtain the results useful in field of application, it is imperative to use datasets of interactions that precisely reflect the network being investigated. In the case of yeast protein-protein interactions, \citet{CKZG07} were recently able to derive a significantly more reliable collection of interactions, primarily based on two large-scale studies of protein complexes by tandem affinity purification of complexes followed by mass spectroscopic identification of individual proteins \citep{GAGK06,KCYZ06}. It is interesting that the same transcriptional complexes encountered in our examples are prominent in the unified physical interactome map presented by \citet{CKZG07}.

The problem of `shortcuts' through the network was also observed by \citet{SPAD02}, who completely eliminated certain nodes in their effort to model signal transduction pathways using the yeast protein-protein interactions. Our evaporating nodes, with a very large incoming dissipation rate, have a similar role with an added advantage that they can be visible as parts of complexes observed using the absorbing model. The list of evaporating nodes used by us is not exhaustive and it would be necessary to add further classes of proteins to it for large-scale investigations of the yeast protein interactome using our methods.

In this paper, we introduced a flexible mathematical framework for analysis of interaction networks and indicated its utility by examples. We believe that the ability to select a particular context for information propagation by setting various model parameters will be extremely useful for addressing questions involving interaction networks in biology and many other disciplines.

\section{Acknowledgments}

We thank Drs John Wootton and David Landsman for encouragement and comments. This work was supported by the Intramural Research Program of the National Library of Medicine at National Institutes of Health.

\newpage
\appendix
\section{Existence of Green's Functions}\label{app:Green}

In this appendix we provide the elementary proofs of the results about existence of the Green's functions stated in the main text. As before, $\Gamma=(V,E,w)$ denotes a weighted directed graph with $N$ vertices, with the weight matrix $\mat{W}$ and transition matrix $\mat{P}$. We also have $T\subset V$ and $S=V\setminus T$.

Recall that for every matrix $\mat{M}$, the induced $\ell_\infty$ norm of $\mat{M}$, written $\norm{\mat{M}}_\infty$, is defined by 
\begin{equation}
\norm{\mat{M}}_\infty=\sup_{\vec{x}\in\R^n}\frac{\norm{\mat{M}\vec{x}}_\infty}{\norm{\vec{x}}_\infty},
\end{equation}
where $\norm{\vec{x}}_\infty=\max_i \abs{x_i}$. One can easily show that
\begin{equation}\label{eqn:l1norm}
\norm{\mat{M}}_\infty = \max_i \sum_j \abs{M_{ij}}.
\end{equation}
Also recall that the spectral radius of a square matrix $\mat{M}$ is defined to be the largest absolute value of its eigenvalues. It is well known that that for every eigenvalue $\lambda$ of $\mat{M}$ and any $k=1,2,\ldots$,
\begin{equation}\label{eq:specrad}
\abs{\lambda} \leq \norm{\mat{M}^k}_\infty^{1/k}.
\end{equation}

\begin{lemma}\label{lemma:mat}
Let $\mat{M}$ be a square matrix with the spectral radius strictly less than $1$. Then, 
\begin{enumerate}[(i)]
\item $\mat{M}^k\to \mat{0}$ as $k\to\infty$,
\item The matrix $\I-\mat{M}$ is invertible and $(\I-\mat{M})^{-1} = \sum_{k=0}^\infty \mat{M}^k$.
\end{enumerate}
\end{lemma}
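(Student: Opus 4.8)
The plan is to isolate the only substantial point --- that the powers of $\mat{M}$ decay geometrically --- and to obtain everything else by routine manipulations.

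First I would dispose of invertibility. The eigenvalues of $\I-\mat{M}$ are exactly the numbers $1-\lambda$, where $\lambda$ ranges over the eigenvalues of $\mat{M}$ (spectral mapping for the affine polynomial $x\mapsto 1-x$; concretely, $\det\!\big(\mu\I-(\I-\mat{M})\big)=\pm\det\!\big((1-\mu)\I-\mat{M}\big)$). Since every eigenvalue $\lambda$ of $\mat{M}$ satisfies $\abs{\lambda}<1$, being bounded by the spectral radius, none of the numbers $1-\lambda$ vanishes; hence $0$ is not an eigenvalue of $\I-\mat{M}$, so $\I-\mat{M}$ is invertible. Next I would record the telescoping identity $(\I-\mat{M})\sum_{k=0}^{N}\mat{M}^k=\I-\mat{M}^{N+1}$, i.e.\ $\sum_{k=0}^{N}\mat{M}^k=(\I-\mat{M})^{-1}(\I-\mat{M}^{N+1})$. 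Granting part (i), the right-hand side converges entrywise to $(\I-\mat{M})^{-1}$ as $N\to\infty$, which simultaneously shows that $\sum_{k\ge 0}\mat{M}^k$ converges and that its sum equals $(\I-\mat{M})^{-1}$; this is part (ii).

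The real work is part (i), and I would prove the stronger quantitative statement that there exist $C>0$ and $\mu\in[0,1)$ with $\norm{\mat{M}^k}_\infty\le C\mu^k$ for all $k$; then $\mat{M}^k\to\mat{0}$ is immediate and the convergence used in part (ii) is manifest. Two routes are available. Via the Jordan canonical form $\mat{M}=\mat{S}\mat{J}\mat{S}^{-1}$: each Jordan block has the form $\lambda\I+\mat{N}$ with $\mat{N}$ nilpotent, so its $k$-th power is $\sum_j \binom{k}{j}\lambda^{k-j}\mat{N}^j$, whose entries $\binom{k}{j}\lambda^{k-j}$ --- a polynomial in $k$ times a geometric factor with $\abs{\lambda}<1$ --- decay geometrically, and conjugation by $\mat{S}$ preserves a geometric bound. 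A more self-contained route, invoking only the existence of a single eigenvector (the fundamental theorem of algebra applied to the characteristic polynomial), is induction on the order of $\mat{M}$: block-triangularize as $\mat{M}\sim\left[\begin{smallmatrix}\lambda_1 & \vec{b}^{\top}\\ \mat{0} & \mat{M}'\end{smallmatrix}\right]$ with $\mat{M}'$ of smaller order and spectral radius still below $1$; apply the inductive bound to $\lambda_1^k$ and to $(\mat{M}')^k$, and estimate the off-diagonal block of $\mat{M}^k$, which is the convolution $\sum_{j}\lambda_1^{\,j}\,\vec{b}^{\top}(\mat{M}')^{\,k-1-j}$ of two geometrically decaying sequences and is therefore $O(k\nu^k)=O(\mu^k)$ for any $\mu$ strictly between $\nu=\max(\abs{\lambda_1},\mu')$ and $1$.

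The main obstacle is exactly this quantitative step: passing from a hypothesis about eigenvalues (spectral radius $<1$) to a decay bound on the entries of $\mat{M}^k$. The naive route through $\norm{\mat{M}}_\infty$ fails, since $\norm{\mat{M}}_\infty$ may exceed $1$ while the spectral radius is $0$ (e.g.\ a strictly upper-triangular matrix with one large entry); and inequality \eqref{eq:specrad} supplies only the easy direction $\rho(\mat{M})\le\norm{\mat{M}^k}_\infty^{1/k}$. The needed converse is the content of the Gelfand spectral-radius formula, and the Jordan (or induction) argument above is precisely what furnishes it in the concrete form $\norm{\mat{M}^k}_\infty\le C\mu^k$. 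Once that bound is in hand, parts (i) and (ii) follow as described, with no further subtlety.
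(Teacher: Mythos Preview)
Your proposal is correct and, on its primary route, essentially the same as the paper's: Jordan decomposition $\mat{M}=\mat{V}\boldsymbol{\Lambda}\mat{V}^{-1}$, binomial expansion of each block $(\lambda_j\I+\mat{C}_j)^k$ with $\mat{C}_j$ nilpotent, and the eigenvalue argument that $\I-\mat{M}$ cannot be singular since $1$ is not an eigenvalue of $\mat{M}$.

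A few points of genuine difference are worth noting. First, your organization of part~(ii) is tighter: you write the partial sum as $(\I-\mat{M})^{-1}(\I-\mat{M}^{N+1})$ and let part~(i) do the work, whereas the paper invokes ``the block diagonal structure of $\boldsymbol{\Lambda}$ and the ratio test'' separately to justify convergence before applying the telescoping identity. Second, you extract the sharper quantitative conclusion $\norm{\mat{M}^k}_\infty\le C\mu^k$; the paper is content with $\mat{M}^k\to\mat{0}$. Third, your alternative inductive route via block-triangularization (one eigenvector at a time, convolving two geometrically decaying sequences for the off-diagonal block) is a genuinely different argument that avoids the full Jordan form and needs only the fundamental theorem of algebra; the paper does not pursue this. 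Both approaches are standard and either suffices here.
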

\begin{proof}
By the Jordan matrix decomposition, we can write $\mat{M}=\mat{V}\boldsymbol{\Lambda}\mat{V}^{-1}$ for some matrix $\mat{V}$, where $\boldsymbol{\Lambda}$ is a block-diagonal matrix of the form 
\[\boldsymbol{\Lambda}=\left[\begin{array}{cccc}
\mat{B}_1 & \mat{0} & \cdots & \mat{0}\\
\mat{0} & \mat{B}_2 & \cdots & \mat{0}\\
\vdots & \vdots & \ddots & \vdots\\
\mat{0} & \mat{0} & \cdots & \mat{B}_N 
\end{array}\right],\]
with each of the sub-blocks $\mat{B}_j$,  $1\leq j\leq N$, is of the form $\mat{B}_j =\lambda_j\I + \mat{C}_j$ where
\[\mat{C}_j= \left[\begin{array}{ccccc}
0 & 1 & 0 & \cdots & 0\\
0 & 0 & 1 & \cdots & 0\\
\vdots & \vdots & \ddots & \ddots & \vdots\\
0 & 0 & 0 & \cdots & 1 \\
0 & 0 & 0 & \cdots & 0 
\end{array}\right] 
\]
and $\lambda_1, \ldots \lambda_N$ are eigenvalues of $\mat{M}$. Hence,  $\mat{M}^k=\mat{V}\boldsymbol{\Lambda}^k\mat{V}^{-1}$ and 
\[\boldsymbol{\Lambda}^k=\left[\begin{array}{cccc}
\mat{B}_1^k & \mat{0} & \cdots & \mat{0}\\
\mat{0} & \mat{B}_2^k & \cdots & \mat{0}\\
\vdots & \vdots & \ddots & \vdots\\
\mat{0} & \mat{0} & \cdots & \mat{B}_N^k 
\end{array}\right].\]
For each eigenvalue $\lambda_j$ and each block $\mat{B}_j$, we can write \[\mat{B}_j^k = (\lambda_j\I+\mat{C}_j)^k = \sum_{p=0}^k \binom{k}{p} \lambda_j^{k-p} \mat{C}_j^p.\] It can easily be shown that for each $j$, $\mat{C}_j$ is a nilpotent matrix, that is, if $\mat{C}_j$ is an $m\times m$ matrix, then $\mat{C}^m=\mat{0}$. Therefore, for $k\geq m-1$, \[ \mat{B}_j^k = \lambda_j^{k-m+1} \left(    \sum_{p=0}^{m-1} \binom{k}{p} \lambda_j^{m-p-1} \mat{C}_j^p \right).  \] Observe that the above expression in parenthesis gives an (upper triangular) matrix whose entries are $m-1$-th degree polynomials in $k$ and hence, that the whole expression for $\mat{B}_j^k$ is dominated by $\lambda_j^{k-m+1}$. Since, by the spectral radius assumption, $\abs{\lambda_j}<1$ for each $i$, it follows that for each $j$, $\mat{B}_j^k\to\mat{0}$ as $k\to\infty$ and hence  $\boldsymbol{\Lambda}^k\to\mat{0}$ as $k\to\infty$ by the block structure. This proves the first statement.

For the second statement suppose  that $\I-\mat{M}$ is singular. Then $\I-\mat{M}$ has $0$ as an eigenvalue and hence $\lambda=1$ is an eigenvalue of $\mat{M}$, contradicting our assumption about the spectral radius of $\mat{M}$. Therefore, $\I-\mat{M}$ is invertible. Furthermore, it can easily be obtained using the block diagonal structure of $\boldsymbol{\Lambda}$ and the ratio test that the sum $\sum_{k=0}^\infty \mat{M}^k$ converges, Hence, \[(\I-\mat{M})\sum_{k=0}^\infty \mat{M}^k =  \sum_{k=0}^\infty \mat{M}^k -  \sum_{k=0}^\infty \mat{M}^{k+1}  = \I + \sum_{k=1}^\infty \mat{M}^k -  \sum_{k=1}^\infty \mat{M}^k = \I.\]
\end{proof}

Since the matrix $\mat{P}$ is stochastic, we have $\norm{\mat{P}}_\infty=1$ and hence the spectral radius of $\mat{P}$ is bounded by $1$. Since $\mat{P}_{TT}$ is a submatrix of $\mat{P}$, we have $\norm{\mat{P}_{TT}}_\infty\leq 1$ and its spectral radius is also bounded by $1$. To prove Proposition \ref{prop:specrad0} (denoted Proposition \ref{prop:specrad} below) we will show that the spectral radius of $\mat{P}_{TT}$ is strictly smaller than $1$ if there is some vertex in $S$ that can be reached from any transient node via a directed path. Before presenting the main proof, we require several lemmas.

\begin{lemma}\label{lemma:sprad1}
Let $\mat{B}$ and $\mat{C}$ be $n\times n$ matrices with non-negative entries such that $\norm{\mat{B}}_\infty\leq 1$ and  $\norm{\mat{C}}_\infty\leq 1$ and let $\mat{D}=\mat{C}\mat{B}$. Suppose there exists $1\leq p\leq n$ such that $0 < \sum_j B_{pj}< 1$. Then, for every $1\leq i\leq n$ such that $C_{ip} > 0$, \[\sum_j D_{ij} < 1.  \]
\end{lemma}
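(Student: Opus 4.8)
The plan is to compute the $i$-th row sum of $\mat{D}$ directly and exploit the non-negativity of all entries together with the two row-sum bounds; this is an elementary estimate rather than anything structural.

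First I would fix $i$ with $C_{ip}>0$ and write
\[
\sum_j D_{ij} = \sum_j \sum_k C_{ik}B_{kj} = \sum_k C_{ik}\left(\sum_j B_{kj}\right) = \sum_k C_{ik}\, r_k,
\]
where $r_k=\sum_j B_{kj}$ is the $k$-th row sum of $\mat{B}$ (interchanging the two finite sums is harmless). Since $\mat{B}$ has non-negative entries and $\norm{\mat{B}}_\infty\leq 1$, we have $0\leq r_k\leq 1$ for every $k$, and the hypothesis gives $r_p<1$ (only the upper bound $r_p<1$ is actually used here).

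Next I would separate the $k=p$ term and bound the remaining terms using $r_k\leq 1$, which is legitimate precisely because every $C_{ik}\geq 0$:
\[
\sum_k C_{ik} r_k = C_{ip} r_p + \sum_{k\neq p} C_{ik} r_k \leq C_{ip} r_p + \sum_{k\neq p} C_{ik} = \left(\sum_k C_{ik}\right) - C_{ip}(1-r_p).
\]
Finally I would invoke $\norm{\mat{C}}_\infty\leq 1$, so $\sum_k C_{ik}\leq 1$, while $C_{ip}(1-r_p)>0$ because $C_{ip}>0$ and $1-r_p>0$. Combining these gives $\sum_j D_{ij}\leq 1-C_{ip}(1-r_p)<1$, as claimed.

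There is no genuine obstacle: the only points needing a moment's attention are that the reordering of summation is valid (all sums are finite), and that the strict inequality must be traced back to the single deficient row $p$ of $\mat{B}$ paired with $C_{ip}>0$ — it is the non-negativity of the entries of both matrices that allows the $k\neq p$ terms to be bounded termwise without closing the strict gap contributed by row $p$.
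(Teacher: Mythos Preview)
Your proof is correct and follows essentially the same approach as the paper: both expand $\sum_j D_{ij}=\sum_k C_{ik}\sum_j B_{kj}$, isolate the $k=p$ term to capture the strict inequality, and bound the remaining terms using $\norm{\mat{B}}_\infty\leq 1$ and $\norm{\mat{C}}_\infty\leq 1$. The only cosmetic difference is that the paper restricts the sum to $K=\{k:C_{ik}>0\}$ before estimating, whereas you sum over all $k$; since the extra terms vanish, this changes nothing.
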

\begin{proof}
Let $K=\{k: C_{ik}>0 \}$. Then $p\in K$ and
\begin{align*}
\sum_j D_{ij} & = \sum_j \sum_k C_{ik}B_{kj}\\
& =   \sum_{k\in K} C_{ik} \sum_j B_{kj}\\
& \leq \sum_{k\in K\setminus\{p\}} C_{ik} \norm{\mat{B}}_\infty + C_{ip} \sum_j B_{pj}\\
& < \sum_{k\in K\setminus\{p\}} C_{ik}+ C_{ip}\\
& \leq 1.
\end{align*}
\end{proof}

\begin{lemma}\label{prop:specrad3}
Let $\Gamma$ be a weighted directed graph with weight matrix $\mat{W}$. Let $i$ and $j$ be distinct nodes of $\Gamma$ connected by a directed path from $i$ to $j$ of length $n\geq 1$. Then $W^n_{ij}>0$. 
\end{lemma}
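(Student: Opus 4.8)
The plan is to induct on the path length $n$, using the combinatorial expansion
\[
W^n_{ij} = \sum_{k_1,\ldots,k_{n-1}} W_{ik_1}W_{k_1k_2}\cdots W_{k_{n-1}j}
\]
together with two elementary facts built into the setup: $\mat{W}$ has non-negative entries, and $w$ is strictly positive on $E$ by the definition of a weighted directed graph (the weight of a non-edge being $0$). Since every summand above is non-negative, it suffices to exhibit a single strictly positive summand, namely the one corresponding to the given directed path.

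First I would dispose of the base case $n=1$: a directed path of length $1$ from $i$ to $j$ is exactly an edge $(i,j)\in E$, so $W_{ij}=w(i,j)>0$ immediately.

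For the inductive step, assume the claim for all directed paths of length $n-1$, with $n\geq 2$, and let $i=v_0,v_1,\ldots,v_n=j$ be a directed path of length $n$. Then $v_0,\ldots,v_{n-1}$ is a directed path of length $n-1$ from $i$ to $v_{n-1}$, and $i=v_0\neq v_{n-1}$ because the vertices along a path are distinct; hence the inductive hypothesis yields $W^{n-1}_{i v_{n-1}}>0$. Also $(v_{n-1},v_n)\in E$, so $W_{v_{n-1}j}=w(v_{n-1},v_n)>0$. Using non-negativity of the entries of $\mat{W}^{n-1}$ and $\mat{W}$,
\[
W^n_{ij} = \sum_{k} W^{n-1}_{ik}\,W_{kj} \ \geq\ W^{n-1}_{i v_{n-1}}\,W_{v_{n-1}j} \ >\ 0,
\]
which closes the induction.

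I do not expect any genuine obstacle here; the statement is essentially bookkeeping about non-negative matrix products. The only point that needs a moment's care is aligning the base case and the inductive step so that the subpath invoked in the inductive step genuinely has length $\geq 1$ with distinct endpoints — which is exactly why the induction is started at $n=1$ and the step is run for $n\geq 2$. (Equivalently, one could skip induction entirely and just read off the single product $W_{v_0v_1}W_{v_1v_2}\cdots W_{v_{n-1}v_n}>0$ appearing in the expansion of $W^n_{ij}$.)
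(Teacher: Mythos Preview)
Your proof is correct and follows essentially the same induction on path length as the paper: base case $n=1$ uses that an edge has positive weight, and the inductive step peels off the last edge of the path and bounds $W^n_{ij}$ below by the single product $W^{n-1}_{i\,v_{n-1}}W_{v_{n-1}j}$. Your extra care in checking $v_0\neq v_{n-1}$ before invoking the inductive hypothesis is a minor refinement the paper leaves implicit.
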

\begin{proof}
We use induction. If $i$ and $j$ are connected with a path of length $1$, then there exists an edge $(i,j)\in E$ and hence $W_{ij}>0$. Assume that $W^m_{ij}>0$ if $i$ and $j$ are connected by a directed path from $i$ to $j$ of length $m$. Suppose $i$ and $j$ are connected by a path of length $m+1$. Then there exists a vertex $k$ such that $i$ and $k$ 
are connected by a directed path from $i$ to $k$ of length $m$ and there exists a directed edge $(k,j)$. Hence, by our assumption $W^m_{ik}>0$ and $W_{kj}>0$. Therefore,
\[W^{m+1}_{ij} = \sum_{k'\in V} W^{m}_{ik'}W_{k'j}\geq W^m_{ik}W_{kj} > 0. \]
\end{proof}

\begin{lemma}\label{lemma:sprad4}
Let $\mat{M}=\mat{P}_{TT}$, let $i\in T$ and suppose there exists $s\in S$ such that there exists a directed path from $i$ to $s$ of length $m$. Then for all $n\geq m$, 
\begin{equation}
\sum_{k\in T} M^{n}_{ik} < 1.
\end{equation}
\end{lemma}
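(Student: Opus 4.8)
The plan is to use that $\mat{M}=\mat{P}_{TT}$ is substochastic — each row sums to at most $1$, since $\norm{\mat{M}}_\infty\le\norm{\mat{P}}_\infty=1$ by Equation (\ref{eqn:l1norm}) — together with the fact that it has at least one deficient row, namely the row of any vertex of $T$ possessing an out-edge into $S$. Transporting that loss of mass back to $i$ along a path of the correct length will produce the strict inequality. First I would normalise the path: given a directed path $i=v_0\to v_1\to\cdots\to v_m=s$ with $s\in S$, set $\ell=\min\{j\ge 1:v_j\in S\}$, which is finite and satisfies $\ell\le m$ because $v_m\in S$. Then $v_0,\dots,v_{\ell-1}\in T$, $v_\ell\in S$, and proving the conclusion for this path of length $\ell$ yields $\sum_{k\in T}M^n_{ik}<1$ for all $n\ge\ell$, hence for all $n\ge m$ a fortiori. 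So I may assume $v_0,\dots,v_{m-1}\in T$ and $v_m=s\in S$.

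Two observations then drive the argument. (i) Since $(v_{m-1},s)\in E$ with $s\in S$, we have $P_{v_{m-1},s}>0$, so $\sum_{k\in T}M_{v_{m-1},k}=1-\sum_{k\in S}P_{v_{m-1},k}\le 1-P_{v_{m-1},s}<1$; moreover, for every $\ell'\ge 1$, writing $\mat{M}^{\ell'}=\mat{M}\,\mat{M}^{\ell'-1}$ and using that each row of $\mat{M}^{\ell'-1}$ sums to at most $1$, we get $\sum_{k\in T}M^{\ell'}_{v_{m-1},k}\le\sum_{j\in T}M_{v_{m-1},j}<1$. (ii) Since $v_0,\dots,v_{m-1}$ all lie in $T$, and the positive entries of $\mat{M}$ on $T\times T$ are exactly the edges of the subgraph of $\Gamma$ induced on $T$, the argument of Lemma \ref{prop:specrad3} (with $\mat{M}$ in place of $\mat{W}$; trivial when $m=1$, since $\mat{M}^0=\I$) gives $M^{m-1}_{i,v_{m-1}}>0$.

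Now I would combine these. For $n\ge m$ write $\mat{M}^n=\mat{M}^{m-1}\mat{M}^{n-m+1}$, so that
\[
\sum_{k\in T}M^n_{ik}=\sum_{j\in T}M^{m-1}_{ij}\sum_{k\in T}M^{n-m+1}_{jk},
\]
and split the sum over $j$ into the single term $j=v_{m-1}$ and the rest. Each inner sum $\sum_{k\in T}M^{n-m+1}_{jk}$ is at most $1$, so the terms with $j\neq v_{m-1}$ contribute at most $\sum_{j\neq v_{m-1}}M^{m-1}_{ij}$; and since $n-m+1\ge 1$, observation (i) makes $\sum_{k\in T}M^{n-m+1}_{v_{m-1},k}<1$ while observation (ii) makes the coefficient $M^{m-1}_{i,v_{m-1}}$ strictly positive, so the $j=v_{m-1}$ term is strictly less than $M^{m-1}_{i,v_{m-1}}$. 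Adding the two pieces gives $\sum_{k\in T}M^n_{ik}<\sum_{j\in T}M^{m-1}_{ij}\le 1$, which is the claim. (This is the pattern of Lemma \ref{lemma:sprad1} with $\mat{C}=\mat{M}^{m-1}$, $\mat{B}=\mat{M}^{n-m+1}$, $p=v_{m-1}$, except that I only use, and only have, the half $\sum_k B_{pk}<1$ of its hypothesis on $\mat{B}$.)

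The step I expect to need the most care is the interplay between the path-normalisation and the degenerate case in which $v_{m-1}$ has no out-edge remaining in $T$, that is, $\sum_{k\in T}M_{v_{m-1},k}=0$. The computation above still goes through — the $j=v_{m-1}$ term is then simply $0<M^{m-1}_{i,v_{m-1}}$ — but this is exactly why Lemma \ref{lemma:sprad1} cannot be quoted verbatim here (it requires $\sum_k B_{pk}>0$), so the inequality must be argued directly; one also has to track the index bookkeeping, since the hypothesis $n\ge m$ is precisely what gives $n-m+1\ge 1$ and hence makes the strict bound $\sum_{k\in T}M^{n-m+1}_{v_{m-1},k}<1$ available.
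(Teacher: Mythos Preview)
Your proof is correct and follows essentially the same strategy as the paper: use Lemma~\ref{prop:specrad3} on the subgraph induced on $T$ to get $M^{m-1}_{i,v_{m-1}}>0$, use the edge into $S$ to make the $v_{m-1}$-row of $\mat{M}$ deficient, and combine. The organisational differences are minor but in your favour. The paper first applies Lemma~\ref{lemma:sprad1} with $\mat{B}=\mat{M}$ and $\mat{C}=\mat{M}^{m-1}$ to obtain the case $n=m$, and then runs a one-step induction to extend to $n>m$; you instead factor $\mat{M}^n=\mat{M}^{m-1}\mat{M}^{n-m+1}$ and use your observation (i) to handle all $n\ge m$ at once. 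You are also more careful on two points the paper glosses over: the path-normalisation (ensuring the intermediate vertices lie in $T$ before invoking Lemma~\ref{prop:specrad3} on $\mat{M}$), and the degenerate case $\sum_{k\in T}M_{v_{m-1},k}=0$, which indeed falls outside the stated hypothesis of Lemma~\ref{lemma:sprad1} though not outside its proof.
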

\begin{proof}
Let $i\in T$ and let $s\in S$ be a vertex such that there exists a directed path from $i$ to $s$ of length $m$. Let $J$ be the set of vertices in $T$ directly adjacent to a vertex in $S$. Then, by our assumption, for every $i\in T$ there exists $j\in J$ such that there exists a directed path from $i$ to $j$ of length $m-1$. Since the matrix $\mat{P}_{TT}$ can be treated as the weight matrix for the subgraph of $\Gamma$ restricted to vertices in $T$, it follows by Lemma \ref{prop:specrad3} that $M^{m-1}_{ij}>0$.

Since every point in $J$ is adjacent to a point in $S$, it also follows that $\sum_{k\in T} M_{jk}< 1$. Clearly, $\norm{\mat{M}}_\infty\leq 1$ and hence $\norm{\mat{M}^{m-1}}_\infty\leq 1$. Applying Lemma \ref{lemma:sprad1} to the matrices $\mat{M}$ and $\mat{M}^{m-1}$ we obtain that for every $i\in T$, $\sum_{k\in T} M^{m}_{ik} < 1$. 

Let $t\geq m$ and assume $\sum_{k\in T} M^{t}_{ik} < 1$. We have 
\[\sum_{k\in T} M^{t+1}_{ik} = \sum_{k\in T} \sum_{k'\in T} M^t_{ik'}M_{k'k} =   \sum_{k'\in T} M^t_{ik'} \sum_{k\in T} M_{k'k} \leq \sum_{k'\in T} M^t_{ik'} < 1 \]
and our result follows by induction.
\end{proof}

\begin{prop}\label{prop:specrad}
Suppose that for every $p\in T$ there exists $s\in S$ such that there exists a directed path from $p$ to $s$. Then, the matrix $\I-\mat{P}_{TT}$ is invertible and 
\begin{equation}\label{eq:invassum}
(\I-\mat{P}_{TT})^{-1} = \sum_{k=0}^\infty (\mat{P}_{TT})^k.
\end{equation}
\end{prop}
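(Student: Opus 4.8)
The plan is to reduce the statement to Lemma~\ref{lemma:mat}: once we know that the spectral radius of $\mat{M}=\mat{P}_{TT}$ is strictly less than $1$, part~(ii) of that lemma immediately yields both the invertibility of $\I-\mat{P}_{TT}$ and the Neumann series representation \eqref{eq:invassum}. So the whole task is to bound the spectral radius of $\mat{P}_{TT}$ away from $1$.

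First I would use the hypothesis together with the finiteness of $T$ to extract a \emph{single} exponent that works simultaneously for every transient vertex. For each $p\in T$ the hypothesis provides a directed path from $p$ to some $s\in S$; let $m_p$ denote its length and set $n_0=\max_{p\in T} m_p$, which is finite since $V$, and hence $T$, is finite. Applying Lemma~\ref{lemma:sprad4} to each $p\in T$ (with the path of length $m_p\le n_0$ and $n=n_0\ge m_p$) gives $\sum_{k\in T} M^{n_0}_{pk}<1$ for every $p\in T$.

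Next I would convert these row-sum inequalities into a genuine bound on an induced norm. Because $T$ is finite, $c:=\max_{p\in T}\sum_{k\in T} M^{n_0}_{pk}$ is a maximum of finitely many numbers each strictly less than $1$, hence $c<1$; and since $\mat{M}^{n_0}$ has non-negative entries, formula \eqref{eqn:l1norm} gives $\norm{\mat{M}^{n_0}}_\infty=c<1$. Then the spectral radius estimate \eqref{eq:specrad} yields, for every eigenvalue $\lambda$ of $\mat{M}$, $\abs{\lambda}\le\norm{\mat{M}^{n_0}}_\infty^{1/n_0}=c^{1/n_0}<1$, so the spectral radius of $\mat{P}_{TT}$ is strictly below $1$. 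Invoking Lemma~\ref{lemma:mat}(ii) completes the proof.

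I do not expect a genuine obstacle here, since the substantive work has already been packaged into Lemmas~\ref{lemma:sprad1}--\ref{lemma:sprad4}; in particular the delicate point — that a single transient vertex adjacent to the boundary forces a strict drop in a row sum, and that this drop propagates backward along directed paths — is precisely the content of Lemma~\ref{lemma:sprad4}. The only step warranting care is the passage from ``every row sum of $\mat{M}^{n_0}$ is $<1$'' to ``$\norm{\mat{M}^{n_0}}_\infty<1$'': an infinite supremum of quantities each $<1$ could equal $1$, so finiteness of $T$ is genuinely used here, and it holds by hypothesis.
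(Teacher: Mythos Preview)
Your proposal is correct and follows essentially the same route as the paper: uniformize the path lengths to a single exponent, invoke Lemma~\ref{lemma:sprad4} to force every row sum of that power of $\mat{P}_{TT}$ strictly below $1$, deduce $\norm{\mat{M}^{n_0}}_\infty<1$ and hence spectral radius $<1$, and conclude via Lemma~\ref{lemma:mat}. The only cosmetic difference is that the paper takes the uniform exponent to be $N=|V|$ (any directed path can be shortened to length at most $N$) rather than your $n_0=\max_p m_p$, and it passes from the row-sum bound to the norm bound in one line without naming the intermediate constant $c$; your added remark about finiteness of $T$ being needed for that passage is a welcome clarification.
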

\begin{proof}
Let $\mat{M}=\mat{P}_{TT}$. Observe that our assumption implies that for every $i\in T$ there exists $s\in S$ such that there exists a directed path from $i$ to $s$ of length at most $N$. By Lemma \ref{lemma:sprad4}, we have for every $i\in T$, $\sum_{k\in T} M^{N}_{ik} < 1$. Hence, $\norm{\mat{M}^{N}}_\infty< 1$ and therefore the spectral radius of $\mat{M}=\mat{P}_{TT}$ is strictly smaller than $1$. Our result follows by Lemma \ref{lemma:mat}.
\end{proof}

\subsection{Information dissipation}\label{app:dissipation}

\begin{prop}\label{prop:dissipation1}
Let $\balpha$ and $\bbeta$ be vectors of length $N$ such that for all $i\in V$, $\alpha_i>0$ and $\beta_i>0$. Define the $N \times N$ matrix $\mat{\tilde{P}}$ with entries
\begin{equation*}
\tilde{P}_{ij} = \alpha_i\beta_j P_{ij},
\end{equation*}
Let $\alpha_*=\max\{\alpha_i:i\in V\}$ and $\beta_*=\max\{\beta_i:i\in V\}$ and suppose $\alpha_*\beta_*<1$. Then, the matrix $\I-\mat{\tilde{P}}_{TT}$ is invertible and 
\begin{equation}\label{eq:invassum2}
(\I-\mat{\tilde{P}}_{TT})^{-1} = \sum_{k=0}^\infty (\mat{\tilde{P}}_{TT})^k.
\end{equation}
\end{prop}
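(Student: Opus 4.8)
The plan is to reduce the statement to Lemma~\ref{lemma:mat} by showing that the spectral radius of $\mat{\tilde{P}}_{TT}$ is strictly less than $1$, exactly as in the proof of Proposition~\ref{prop:specrad}, but with the connectivity and path-length machinery replaced by a one-line norm estimate that the hypothesis $\alpha_*\beta_*<1$ makes available directly.

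First I would bound the row sums of $\mat{\tilde{P}}_{TT}$. Fix $i\in T$. Since every $P_{ij}$ is non-negative, $\beta_j\leq\beta_*$ for all $j$, and $\mat{P}$ is stochastic (so $\sum_{j\in T}P_{ij}\leq\sum_{j\in V}P_{ij}=1$), one gets
\[
\sum_{j\in T}\tilde{P}_{ij}=\alpha_i\sum_{j\in T}\beta_jP_{ij}\leq\alpha_i\beta_*\sum_{j\in T}P_{ij}\leq\alpha_i\beta_*\leq\alpha_*\beta_*.
\]
By the formula~(\ref{eqn:l1norm}) for the induced $\ell_\infty$ norm this gives $\norm{\mat{\tilde{P}}_{TT}}_\infty\leq\alpha_*\beta_*<1$. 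Applying the spectral-radius inequality~(\ref{eq:specrad}) with $k=1$, every eigenvalue $\lambda$ of $\mat{\tilde{P}}_{TT}$ satisfies $\abs{\lambda}\leq\norm{\mat{\tilde{P}}_{TT}}_\infty<1$, so the spectral radius of $\mat{\tilde{P}}_{TT}$ is strictly less than $1$. Lemma~\ref{lemma:mat}(ii) then yields both the invertibility of $\I-\mat{\tilde{P}}_{TT}$ and the Neumann series identity~(\ref{eq:invassum2}), which is the desired conclusion.

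I do not expect any genuine obstacle here: the only point worth emphasizing is that it is the \emph{product} $\alpha_*\beta_*$, not the individual coefficients (which the paper allows to exceed $1$ separately), that controls the row sums, and that a single step suffices here, in contrast with Proposition~\ref{prop:specrad}, where one had to pass to $\mat{M}^N$ because individual rows of $\mat{P}_{TT}$ need not sum to less than $1$. In the accompanying discussion of generalizations I would note that the argument really only needs $\max_{i\in T}\alpha_i\sum_{j\in T}\beta_jP_{ij}<1$, and more generally that it is enough for some power of $\mat{\tilde{P}}_{TT}$ to have induced $\ell_\infty$ norm below $1$; this still forces the spectral radius below $1$ and hence preserves the conclusion, so localized amplification is permissible as long as it is outweighed, in the cumulative sense captured by that norm, by dissipation elsewhere in $T$.
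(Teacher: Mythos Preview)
Your argument is correct and is essentially identical to the paper's own proof: bound each row sum of $\mat{\tilde{P}}_{TT}$ by $\alpha_*\beta_*<1$, conclude $\norm{\mat{\tilde{P}}_{TT}}_\infty<1$ and hence that the spectral radius is below $1$, and then invoke Lemma~\ref{lemma:mat}. Your closing remarks on the weaker sufficient condition also align with the paper's subsequent discussion of the generalization via an auxiliary absorbing vertex.
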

\begin{proof}
Let $\mat{M}=\mat{\tilde{P}}_{TT}$ and let $i\in T$. Then, 
\begin{align*}
\sum_{j\in T} M_{ij} & = \sum_{j\in T} \alpha_i\beta_jP_{ij} \leq \alpha_*\beta_* \sum_{j\in T} P_{ij}< 1.
\end{align*}
Hence, $\norm{\mat{M}}_\infty<1$ and thus the spectral radius of $\mat{\tilde{P}}_{TT}$ is strictly smaller than $1$. Our result then follows by Lemma \ref{lemma:mat}. 
\end{proof}

More generally, it is possible to interpret dissipation in the light of Proposition \ref{prop:specrad} by constructing a new graph $\tilde{\Gamma}$ with the vertex set $\tilde{V}=V\cup\{v\}$, where $v$ denotes an additional vertex. The weight matrix of $\tilde{\Gamma}$, denoted $\mat{\tilde{W}}$, has entries
\begin{equation}
\tilde{W}_{ij} = \begin{cases}
\alpha_i\beta_j P_{ij} & \text{if $i\in V$ and $j\in V$,}\\
1 - \sum_{k\in V} \alpha_i\beta_k P_{ik} & \text{if $i\in V$ and $j=v$,}\\
0 & \text{if $i=v$.}
\end{cases}
\end{equation}
Clearly, a random walk on $\tilde{\Gamma}$ is equivalent to a random walk on $\Gamma$ with dissipation: the dissipated information is directed towards the additional vertex $v$ and then disappears. If we place $v$ in the boundary set $\tilde{S}$, by Proposition \ref{prop:specrad}, the necessary condition for existence of the Green's function $(\I-\mat{\tilde{P}}_{TT})^{-1}$ is that from every transient node $i$ there exists a directed path to either a node $s\in S$ or a node $j\in T$ such that $\sum_{k\in V} \alpha_j\beta_k P_{jk}< 1$ (such node $j$ is adjacent to $v$ in the graph $\tilde{\Gamma}$. Proposition \ref{prop:dissipation1} then just represents the special case where every transient vertex is adjacent to $v$ in $\tilde{\Gamma}$.

\section{Interpretations of the matrices $\mat{F}$ and $\mat{H}$ }\label{app:interpr}

\subsection{$\mat{F}$ and $\mat{H}$ as matrices of expected visiting times} \label{app:interpr1}

We will show that both $F_{ij}$ and $H_{ij}$ can be interpreted as the expected number of times a random walk originating at the vertex $i$ visits the vertex $j$, while avoiding all vertices in the boundary set $S$. Note that in the case of the matrix $\mat{F}$, we have $i\in T$ and $j\in S$ while for the matrix $\mat{H}$, $i\in S$ and $j\in T$. We will use $\E$ to denote the expectation operator.

\begin{lemma}
Suppose the boundary set $S$ represents sinks and let $Z_{ij}$ be a random variable denoting the total number of times a random walk starting at $i\in T$ is absorbed at $j\in S$. Then,
\begin{equation}
\E(Z_{ij}) = F_{ij}.
\end{equation}
\end{lemma}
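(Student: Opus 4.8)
The plan is to deduce the identity from the definition of $F_{ij}(t)$ together with the fact, noted in Section~\ref{subsubsec:Green}, that $\mat{F}=\lim_{t\to\infty}\mat{F}(t)$, using only the Markov property and continuity of probability. The first observation is that a sink absorbs the walk permanently, so $Z_{ij}$ takes only the values $0$ and $1$: it equals $1$ precisely when the first visit to the boundary set $S$ of the walk started at $i\in T$ occurs at the vertex $j$, and equals $0$ otherwise (the walk is absorbed at some other sink, or never reaches $S$). Hence $\E(Z_{ij})=\mathbb{P}(Z_{ij}=1)$, the probability that a walk from $i$ is eventually absorbed at $j$ while avoiding every other sink.

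Next I would express this probability as a limit of finite-horizon probabilities. For $t\geq 1$ let $A^{(t)}_{ij}$ be the event that the walk from $i$ is absorbed at $j$ within $t$ steps; by the very definition of $F_{ij}(t)$ we have $\mathbb{P}(A^{(t)}_{ij})=F_{ij}(t)$. The events $A^{(t)}_{ij}$ increase with $t$ and $\bigcup_{t\geq 1}A^{(t)}_{ij}=\{Z_{ij}=1\}$, so continuity of probability from below gives $\mathbb{P}(Z_{ij}=1)=\lim_{t\to\infty}F_{ij}(t)$. If one wishes to avoid leaning on the informal definition of $F_{ij}(t)$, the same conclusion follows by checking directly --- via one-step conditioning on the first move of the chain, exactly the decomposition used to derive Equation~(\ref{eqn:sink0}) --- that $\mathbb{P}(A^{(t)}_{ij})$ obeys the recursion (\ref{eqn:sink1}) with the same initial data, and is therefore equal to $F_{ij}(t)$.

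Finally, $\lim_{t\to\infty}F_{ij}(t)=F_{ij}$: unrolling the recursion gives $\mat{F}(t)=\sum_{k=0}^{t-1}(\mat{P}_{TT})^k\mat{P}_{TS}$, which converges to $\mat{G}\mat{P}_{TS}=\mat{F}$ by Equation~(\ref{eq:invassum0}) whenever the Green's function exists. Chaining the three steps gives $\E(Z_{ij})=\mathbb{P}(Z_{ij}=1)=\lim_{t\to\infty}F_{ij}(t)=F_{ij}$, as claimed. I expect the only genuine subtlety to be the middle step: one must set up the space of walk trajectories so that the increasing family of finite-horizon absorption events is measurable and the monotone/continuity argument is legitimate, and observe that trajectories that are never absorbed at $j$ (absorbed elsewhere, or wandering forever) simply contribute $0$, so no hypothesis beyond the existence of $\mat{F}$ is needed. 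A fully combinatorial alternative avoids the limit altogether: expand $F_{ij}=(\mat{G}\mat{P}_{TS})_{ij}$ as a sum of probabilities of finite paths from $i$ to $j$ whose interior stays in $T$, and match it term by term with $\E(Z_{ij})=\sum_{\omega}\mathbb{P}(\omega)$ over those same, pairwise disjoint, absorption paths --- the path interpretation discussed at the end of Section~\ref{subsubsec:Green}.
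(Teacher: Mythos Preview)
Your argument is correct. The paper's proof takes a closely related but more directly computational route: it writes $Z_{ij}=\sum_{t\geq 1}Y_{ij}(t)$ where $Y_{ij}(t)$ is the indicator that absorption at $j$ occurs \emph{exactly} at time $t$, notes that $\E(Y_{ij}(t))=\sum_{k\in T}(\mat{P}_{TT})^{t-1}_{ik}(\mat{P}_{TS})_{kj}$, and then sums over $t$ using linearity of expectation and the series identity $\mat{G}=\sum_{t\geq 0}(\mat{P}_{TT})^{t}$ to obtain $F_{ij}$. This is essentially the ``fully combinatorial alternative'' you sketch at the end; the paper simply adopts it as the main proof. Your primary route --- observing that $Z_{ij}$ is Bernoulli, identifying $\mathbb{P}(A^{(t)}_{ij})$ with $F_{ij}(t)$, and passing to the limit via continuity from below --- is a genuine variant: it makes the probabilistic content (absorption happens at most once, the finite-horizon events increase) more explicit and leans on the recursion for $\mat{F}(t)$ rather than on the time-$t$ absorption probability formula. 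Both approaches ultimately rely on the same convergence $\sum_{k}(\mat{P}_{TT})^{k}\to\mat{G}$, and neither requires more than the existence of the Green's function.
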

\begin{proof}
Let $Y_{ij}(t)$ be the random variable taking the value $1$ if the random walk originating at $i\in T$ is absorbed at $j\in S$ at time $t$, with probability $\sum_{k\in T} P_{ik}^{t-1} P_{kj}$, and taking the value $0$ otherwise. We have $Z_{ij}=\sum_{t=1}^\infty Y_{ij}(t)$ and $\E(Y_{ij}(t)) = \sum_{k\in T} P_{ik}^{t-1} P_{kj}$. Thus,
\begin{align*}
\E(Z_{ij}) & = \E\left(\sum_{t=1}^\infty Y_{ij}(t)\right)\\
& = \sum_{t=1}^\infty \E(Y_{ij}(t))\\
& = \sum_{t=1}^\infty \sum_{k\in T} P_{ik}^{t-1} P_{kj}\\
& = \sum_{k\in T} \sum_{t=0}^\infty P_{ik}^{t}P_{kj}\\
& = \sum_{k\in T} G_{ik} P_{kj}\\
& = F_{ij}. \qedhere
\end{align*}
\end{proof}

\begin{lemma}
Suppose the boundary set $S$ represents sources and let $Z_{ij}$ be a random variable denoting the total number of times a random walk starting at $i\in S$ visits the node $j\in T$. Then,
\begin{equation}
\E(Z_{ij}) = H_{ij}.
\end{equation}
\end{lemma}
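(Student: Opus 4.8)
The plan is to mirror the argument just given for $\mat{F}$, exchanging the roles of the initial and terminal endpoints and using the identity $\mat{G}=(\I-\mat{P}_{TT})^{-1}=\sum_{s=0}^\infty(\mat{P}_{TT})^s$. First I would write $Z_{ij}$ as a sum of indicators: for each $t\geq 1$, let $Y_{ij}(t)$ be the random variable equal to $1$ if the random walk emitted from $i\in S$ occupies the transient vertex $j$ at time $t$ and $0$ otherwise, so that $Z_{ij}=\sum_{t=1}^\infty Y_{ij}(t)$.

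Next I would compute $\E(Y_{ij}(t))=\Pr[\text{the walk is at }j\text{ at time }t]$. The key observation is that, since any information entering a source vanishes, a trajectory contributing to this event must leave $i$, land at some $k\in T$ at time $1$, and stay inside $T$ at all subsequent times up to $t$; hence the probability factorizes as $\sum_{k\in T}P_{ik}(\mat{P}_{TT})^{t-1}_{kj}$, which for $t=1$ reads $\sum_{k\in T}P_{ik}\delta_{kj}=P_{ij}$ because $j\in T$. I would remark that this is exactly the quantity summed by the recursion (\ref{eqn:source0}) defining $\mat{H}(t)$.

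Then I would assemble the result. Since all terms are non-negative, Tonelli's theorem (equivalently, monotone convergence) allows interchanging the expectation with the infinite sum, so
\begin{align*}
\E(Z_{ij}) &= \sum_{t=1}^\infty \E(Y_{ij}(t)) = \sum_{t=1}^\infty \sum_{k\in T} P_{ik}(\mat{P}_{TT})^{t-1}_{kj}\\
&= \sum_{k\in T} P_{ik} \sum_{s=0}^\infty (\mat{P}_{TT})^{s}_{kj} = \sum_{k\in T} P_{ik} G_{kj} = (\mat{P}_{ST}\mat{G})_{ij} = H_{ij},
\end{align*}
where the reindexing $s=t-1$ and the identification $\sum_{s\geq 0}(\mat{P}_{TT})^s=\mat{G}$ are legitimate because the series converges under the standing hypothesis that the Green's function exists (Proposition \ref{prop:specrad0}, or Proposition \ref{prop:dissipation} in the dissipative case).

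I do not expect a genuine obstacle: this is the transpose of the preceding lemma. The only points needing care are (i) arguing that a contributing trajectory never visits $S$ at an intermediate time — this is precisely what makes the relevant transition probabilities be entries of powers of $\mat{P}_{TT}$ rather than of $\mat{P}$ — and (ii) the interchange of $\E$ with the infinite series, which is immediate from non-negativity of the summands.
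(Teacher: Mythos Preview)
Your proof is correct and follows essentially the same route as the paper: decompose $Z_{ij}$ into indicator variables $Y_{ij}(t)$, identify $\E(Y_{ij}(t))=\sum_{k\in T}P_{ik}(\mat{P}_{TT})^{t-1}_{kj}$, and sum the resulting geometric series to recover $\mat{P}_{ST}\mat{G}=\mat{H}$. Your version is slightly more explicit in justifying the interchange of expectation and infinite sum (via non-negativity) and in explaining why only powers of $\mat{P}_{TT}$ appear, but the argument is the same.
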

\begin{proof}
In the same fashion as above, let $Y_{ij}(t)$ be the random variable taking the value $1$ if the random walk originating at $i\in S$ is at $j\in T$ at time $t$, with probability $\sum_{k\in T} P_{ik} P^{t-1}_{kj}$, and taking the value $0$ otherwise. We have $Z_{ij}=\sum_{t=1}^\infty Y_{ij}(t)$ and $\E(Y_{ij}(t)) = \sum_{k\in T} P_{ik} P_{kj}^{t-1}$. Thus,
\begin{align*}
\E(Z_{ij}) & = \E\left(\sum_{t=1}^\infty Y_{ij}(t)\right)\\
& = \sum_{t=1}^\infty \E(Y_{ij}(t))\\
& = \sum_{t=1}^\infty \sum_{k\in T} P_{ik} P^{t-1}_{kj}\\
& = \sum_{k\in T} \sum_{t=0}^\infty P_{ik}P^{t}_{kj}\\
& = \sum_{k\in T} P_{ik}G_{kj}\\
& = H_{ij}. \qedhere
\end{align*}
\end{proof}

\subsection{Invariants of $\mat{F}$ and $\mat{H}$}\label{app:interpr2}

Let $\vec{1}\in\R^n$ denote the vector whose entries are all $1$'s. Since all rows of $\mat{P}$ sum to unity, it follows that $\mat{P}\vec{1} = \vec{1}$ and hence $\vec{1}$ is a right eigenvector of $\mat{P}$ for the eigenvalue $\lambda=1$. Define $\vec{d}$ as a vector of length $n$ having entries $d_i = \sum_j W_{ij}$. If $\Gamma$ is unweighted graph, $d_i$ gives the degree of the node $i$. Assuming $\mat{W}$ is symmetric,  \[ \sum_k P_{kj}d_k = \sum_k W_{kj} = \sum_k W_{jk} = d_j \] and therefore $\vec{d}$ is a left eigenvector of $\mat{P}$ corresponding to the eigenvalue $\lambda=1$.  This leads to the following result.

\begin{lemma}\label{lemma:Fstoch}
Suppose that the matrix $\I-\mat{P}_{TT}$ is invertible. Let $\vec{u}$ and $\vec{v}$ be 
the left and right eigenvector of the matrix $\mat{P}$ corresponding to the eigenvalue $\lambda=1$, respectively. Write $\vec{u} = [\vec{u}_S\ \vec{u}_T ]$ and $\vec{v}=\left[\begin{array}{c}\vec{v}_S\\ \vec{v}_T \end{array}\right]$. Then, 
\begin{equation}\label{eqn:leigen0}
\vec{u}_T = \vec{u}_S\mat{H}, 
\end{equation}
and
\begin{equation}\label{eqn:reigen0}
\vec{v}_T = \mat{F}\vec{v}_S.
\end{equation}
\end{lemma}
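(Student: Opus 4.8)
The plan is to simply write the two eigenvector equations in the block form induced by the partition $V = S \cup T$ and read off the claimed identities from the ``transient'' block row, using the invertibility of $\I - \mat{P}_{TT}$ to pass to $\mat{G} = (\I-\mat{P}_{TT})^{-1}$.

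First I would treat the right eigenvector. Writing $\mat{P}$ in the canonical form of Equation (\ref{eqn:Pcannonical}) and $\vec{v} = \left[\begin{array}{c}\vec{v}_S\\ \vec{v}_T\end{array}\right]$, the relation $\mat{P}\vec{v} = \vec{v}$ becomes the pair
\begin{align*}
\mat{P}_{SS}\vec{v}_S + \mat{P}_{ST}\vec{v}_T &= \vec{v}_S,\\
\mat{P}_{TS}\vec{v}_S + \mat{P}_{TT}\vec{v}_T &= \vec{v}_T.
\end{align*}
Only the second equation is needed: rearranging gives $(\I-\mat{P}_{TT})\vec{v}_T = \mat{P}_{TS}\vec{v}_S$, and since $\I-\mat{P}_{TT}$ is invertible by hypothesis, $\vec{v}_T = \mat{G}\mat{P}_{TS}\vec{v}_S = \mat{F}\vec{v}_S$ by Equation (\ref{eqn:sink4}). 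This is Equation (\ref{eqn:reigen0}).

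Next I would do the dual computation for the left eigenvector. With $\vec{u} = [\vec{u}_S\ \vec{u}_T]$ a row vector, $\vec{u}\mat{P} = \vec{u}$ expands to
\begin{align*}
\vec{u}_S\mat{P}_{SS} + \vec{u}_T\mat{P}_{TS} &= \vec{u}_S,\\
\vec{u}_S\mat{P}_{ST} + \vec{u}_T\mat{P}_{TT} &= \vec{u}_T.
\end{align*}
Again taking only the second (the $T$-block) equation, $\vec{u}_T(\I-\mat{P}_{TT}) = \vec{u}_S\mat{P}_{ST}$, so $\vec{u}_T = \vec{u}_S\mat{P}_{ST}\mat{G} = \vec{u}_S\mat{H}$ by Equation (\ref{eqn:source4}), establishing Equation (\ref{eqn:leigen0}). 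There is no real obstacle here: the argument is pure block matrix algebra, and the only hypothesis actually invoked is the invertibility of $\I-\mat{P}_{TT}$; the $S$-block equations play no role and may be discarded. If desired, I would add a remark that, combined with the earlier observation that $\vec{1}$ is a right eigenvector for $\lambda = 1$ (whence $\vec{v}_S = \vec{1}$ and $\vec{v}_T = \vec{1}$), Equation (\ref{eqn:reigen0}) yields $\mat{F}\vec{1} = \vec{1}$, i.e.\ the rows of $\mat{F}$ sum to $1$; and when $\mat{W}$ is symmetric, $\vec{u} = \vec{d}$ gives $\vec{d}_T = \vec{d}_S\mat{H}$, the degree-proportionality statement used in the main text.
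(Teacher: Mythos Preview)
Your proof is correct and follows essentially the same route as the paper: write the eigenvector equations in block form using (\ref{eqn:Pcannonical}), take the $T$-block component, rearrange to isolate $\I-\mat{P}_{TT}$, and invert. The paper's proof is terser (it writes only the $T$-block equations directly) and defers the remarks about $\mat{F}\vec{1}=\vec{1}$ and $\vec{d}_T=\vec{d}_S\mat{H}$ to the discussion following the lemma, but the substance is identical.
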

\begin{proof}
Using the canonical form of the matrix $\mat{P}$ (Equation (\ref{eqn:Pcannonical})) and the fact that $\vec{u}$ and $\vec{v}$ are left and right eigenvectors of $\mat{P}$ respectively, we obtain
\begin{equation}\label{eqn:leigen1}
\vec{u}_T = \vec{u}_S\mat{P}_{ST} + \vec{u}_T\mat{P}_{TT}, 
\end{equation}
and
\begin{equation}\label{eqn:reigen1}
\vec{v}_T = \mat{P}_{TS}\vec{v}_S + \mat{P}_{TT}\vec{v}_T.
\end{equation}
Rearranging Equations (\ref{eqn:leigen1}) and (\ref{eqn:reigen1}) leads to 
\begin{equation}\label{eqn:leigen2}
\vec{u}_T(\I-\mat{P}_{TT}) = \vec{u}_S\mat{P}_{ST}, 
\end{equation}
and
\begin{equation}\label{eqn:reigen2}
(\I-\mat{P}_{TT})\vec{v}_T = \mat{P}_{TS}\vec{v}_S.
\end{equation}
Our result then follows as the consequence of invertibility of $\I- \mat{P}_{TT}$.
\end{proof}

Since $\vec{1}$ is a right eigenvector of $\mat{P}$, it follows from (\ref{eqn:reigen0}) that for all $i$, $\sum_{j\in S} F_{ij} = 1$. Furthemore, recall that if $\Gamma$ is an undirected graph, $\mat{W}$ is symmetric and $\vec{d}$ is a left eigenvector of $\mat{P}$ for $\lambda=1$. Assuming the matrix $\mat{H}$ exists, we obtain from Lemma \ref{lemma:Fstoch} that, if $S$ contains a single point, the matrix $\mat{H}$ is a row vector, which is a multiple of $\vec{d}_T$. 

\bibliographystyle{jcbnatshrt} 
\bibliography{networks}

\begin{thebibliography}{89}
\providecommand{\natexlab}[1]{#1}
\providecommand{\url}[1]{\texttt{#1}}
\providecommand{\urlprefix}{URL }

\bibitem[{Aittokallio and Schwikowski(2006)}]{AS06}
Aittokallio, T. and Schwikowski, B. 2006.
\newblock {{G}raph-based methods for analysing networks in cell biology}.
\newblock \emph{Brief Bioinform} 7, 243--255.

\bibitem[{Allard \emph{et~al.}(1999)Allard, Utley, Savard
  \emph{et~al.}}]{AUSCGBPWC99}
Allard, S., Utley, R.~T., Savard, J. \emph{et~al.} 1999.
\newblock {{N}u{A}4, an essential transcription adaptor/histone {H}4
  acetyltransferase complex containing {E}sa1p and the {A}{T}{M}-related
  cofactor {T}ra1p}.
\newblock \emph{EMBO J} 18, 5108--5119.

\bibitem[{Althoefer \emph{et~al.}(1995)Althoefer, Schleiffer, Wassmann
  \emph{et~al.}}]{ASWN95}
Althoefer, H., Schleiffer, A., Wassmann, K. \emph{et~al.} 1995.
\newblock {{M}cm1 is required to coordinate {G}2-specific transcription in
  {S}accharomyces cerevisiae}.
\newblock \emph{Mol Cell Biol} 15, 5917--5928.

\bibitem[{Amaral \emph{et~al.}(2000)Amaral, Scala, Barthelemy
  \emph{et~al.}}]{ASBS00}
Amaral, L.~A., Scala, A., Barthelemy, M. \emph{et~al.} 2000.
\newblock {{C}lasses of small-world networks}.
\newblock \emph{Proc Natl Acad Sci U S A} 97, 11149--11152.

\bibitem[{Bapat and Raghavan(1997)}]{BR97}
Bapat, R.~B. and Raghavan, T. E.~S. 1997.
\newblock \emph{Nonnegative matrices and applications}, \emph{Encyclopedia of
  Mathematics and its Applications}, volume~64.
\newblock Cambridge University Press, Cambridge.

\bibitem[{Barthelemy \emph{et~al.}(2005)Barthelemy, Barrat, Pastor-Satorras
  \emph{et~al.}}]{BBPV05}
Barthelemy, M., Barrat, A., Pastor-Satorras, R. \emph{et~al.} 2005.
\newblock {{D}ynamical patterns of epidemic outbreaks in complex heterogeneous
  networks}.
\newblock \emph{J Theor Biol} 235, 275--288.

\bibitem[{Biddick and Young(2005)}]{BY05}
Biddick, R. and Young, E. 2005.
\newblock {Yeast mediator and its role in transcriptional regulation}.
\newblock \emph{C R Biol} 328, 773--82.

\bibitem[{Borrow \emph{et~al.}(1996)Borrow, Stanton, Andresen
  \emph{et~al.}}]{BSABBCCDDFHMVWH96}
Borrow, J., Stanton, V. P.~J., Andresen, J.~M. \emph{et~al.} 1996.
\newblock {{T}he translocation t(8;16)(p11;p13) of acute myeloid leukaemia
  fuses a putative acetyltransferase to the {C}{R}{E}{B}-binding protein}.
\newblock \emph{Nat Genet} 14, 33--41.

\bibitem[{Broder \emph{et~al.}(2000)Broder, Kumar, Maghoul
  \emph{et~al.}}]{BKFP00}
Broder, A., Kumar, R., Maghoul, F. \emph{et~al.} 2000.
\newblock Graph structure in the web.
\newblock \emph{Comput. Networks} 33, 309--320.

\bibitem[{Brown \emph{et~al.}(2001)Brown, Howe, Sousa \emph{et~al.}}]{BHSA01}
Brown, C.~E., Howe, L., Sousa, K. \emph{et~al.} 2001.
\newblock {{R}ecruitment of {H}{A}{T} complexes by direct activator
  interactions with the {A}{T}{M}-related {T}ra1 subunit}.
\newblock \emph{Science} 292, 2333--2337.

\bibitem[{Bruhn \emph{et~al.}(1992)Bruhn, Hwang-Shum, and Sprague}]{BHS92}
Bruhn, L., Hwang-Shum, J.~J., and Sprague, G. F.~J. 1992.
\newblock {{T}he {N}-terminal 96 residues of {M}{C}{M}1, a regulator of cell
  type-specific genes in {S}accharomyces cerevisiae, are sufficient for
  {D}{N}{A} binding, transcription activation, and interaction with alpha 1}.
\newblock \emph{Mol Cell Biol} 12, 3563--3572.

\bibitem[{Cancho \emph{et~al.}(2001)Cancho, Janssen, and Sol\'e}]{CJS01}
Cancho, R. F.~i., Janssen, C., and Sol\'e, R.~V. 2001.
\newblock Topology of technology graphs: Small world patterns in electronic
  circuits.
\newblock \emph{Phys. Rev. E} 64, 046119.

\bibitem[{Chen and Yuan(2006)}]{CY06}
Chen, J. and Yuan, B. 2006.
\newblock Detecting functional modules in the yeast protein--protein
  interaction network.
\newblock \emph{Bioinformatics} 22, 2283--2290.

\bibitem[{Chua \emph{et~al.}(2006)Chua, Sung, and Wong}]{CSW06}
Chua, H.~N., Sung, W.-K., and Wong, L. 2006.
\newblock {{E}xploiting indirect neighbours and topological weight to predict
  protein function from protein-protein interactions}.
\newblock \emph{Bioinformatics} 22, 1623--1630.

\bibitem[{Chung and Yau(2000)}]{CY00}
Chung, F. and Yau, S.-T. 2000.
\newblock Discrete {G}reen's functions.
\newblock \emph{J. Combin. Theory Ser. A} 91, 191--214.

\bibitem[{Chung(1997)}]{Chung97}
Chung, F. R.~K. 1997.
\newblock \emph{Spectral graph theory}, \emph{CBMS Regional Conference Series
  in Mathematics}, volume~92.
\newblock Published for the Conference Board of the Mathematical Sciences,
  Washington, DC.

\bibitem[{Collins \emph{et~al.}(2007)Collins, Kemmeren, Zhao
  \emph{et~al.}}]{CKZG07}
Collins, S.~R., Kemmeren, P., Zhao, X.-C. \emph{et~al.} 2007.
\newblock {{T}oward a {C}omprehensive {A}tlas of the {P}hysical {I}nteractome
  of {S}accharomyces cerevisiae}.
\newblock \emph{Mol Cell Proteomics} 6, 439--450.

\bibitem[{Davis(2004)}]{Davis04}
Davis, T.~A. 2004.
\newblock Algorithm 832: Umfpack v4.3---an unsymmetric-pattern multifrontal
  method.
\newblock \emph{ACM Trans. Math. Softw.} 30.

\bibitem[{Deane \emph{et~al.}(2002)Deane, Salwinski, Xenarios
  \emph{et~al.}}]{DSXE02}
Deane, C.~M., Salwinski, L., Xenarios, I. \emph{et~al.} 2002.
\newblock {{P}rotein interactions: two methods for assessment of the
  reliability of high throughput observations}.
\newblock \emph{Mol Cell Proteomics} 1, 349--356.

\bibitem[{Doyon and Cote(2004)}]{DC04}
Doyon, Y. and Cote, J. 2004.
\newblock {{T}he highly conserved and multifunctional {N}u{A}4 {H}{A}{T}
  complex}.
\newblock \emph{Curr Opin Genet Dev} 14, 147--154.

\bibitem[{Dubois and Messenguy(1991)}]{DM91}
Dubois, E. and Messenguy, F. 1991.
\newblock {{I}n vitro studies of the binding of the {A}{R}{G}{R} proteins to
  the {A}{R}{G}5,6 promoter}.
\newblock \emph{Mol Cell Biol} 11, 2162--2168.

\bibitem[{Eberharter \emph{et~al.}(1999)Eberharter, Sterner, Schieltz
  \emph{et~al.}}]{ESSHYBW99}
Eberharter, A., Sterner, D.~E., Schieltz, D. \emph{et~al.} 1999.
\newblock {{T}he {A}{D}{A} complex is a distinct histone acetyltransferase
  complex in {S}accharomyces cerevisiae}.
\newblock \emph{Mol Cell Biol} 19, 6621--6631.

\bibitem[{Galitski(2004)}]{Galitski04}
Galitski, T. 2004.
\newblock {{M}olecular networks in model systems}.
\newblock \emph{Annu Rev Genomics Hum Genet} 5, 177--187.

\bibitem[{Gansner and North(2000)}]{GN00}
Gansner, E.~R. and North, S.~C. 2000.
\newblock An open graph visualization system and its applications to software
  engineering.
\newblock \emph{Soft\-ware --- Prac\-tice and Experience} 30, 1203--1233.

\bibitem[{Gavin \emph{et~al.}(2006)Gavin, Aloy, Grandi \emph{et~al.}}]{GAGK06}
Gavin, A.-C., Aloy, P., Grandi, P. \emph{et~al.} 2006.
\newblock {{P}roteome survey reveals modularity of the yeast cell machinery}.
\newblock \emph{Nature} 440, 631--636.

\bibitem[{Giot \emph{et~al.}(2003)Giot, Bader, Brouwer \emph{et~al.}}]{GBBC03}
Giot, L., Bader, J.~S., Brouwer, C. \emph{et~al.} 2003.
\newblock {{A} protein interaction map of {D}rosophila melanogaster}.
\newblock \emph{Science} 302, 1727--1736.

\bibitem[{Girvan and Newman(2002)}]{GN02}
Girvan, M. and Newman, M. E.~J. 2002.
\newblock {{C}ommunity structure in social and biological networks}.
\newblock \emph{Proc Natl Acad Sci U S A} 99, 7821--7826.

\bibitem[{Grant \emph{et~al.}(1998)Grant, Schieltz, Pray-Grant
  \emph{et~al.}}]{GSPY98}
Grant, P., Schieltz, D., Pray-Grant, M. \emph{et~al.} 1998.
\newblock {The ATM-related cofactor Tra1 is a component of the purified SAGA
  complex}.
\newblock \emph{Mol Cell} 2, 863--7.

\bibitem[{Grant \emph{et~al.}(1997)Grant, Duggan, Cote
  \emph{et~al.}}]{GDCRBCOOAWBW97}
Grant, P.~A., Duggan, L., Cote, J. \emph{et~al.} 1997.
\newblock {{Y}east {G}cn5 functions in two multisubunit complexes to acetylate
  nucleosomal histones: characterization of an {A}da complex and the
  {S}{A}{G}{A} ({S}pt/{A}da) complex}.
\newblock \emph{Genes Dev} 11, 1640--1650.

\bibitem[{Guelzim \emph{et~al.}(2002)Guelzim, Bottani, Bourgine
  \emph{et~al.}}]{GBBK02}
Guelzim, N., Bottani, S., Bourgine, P. \emph{et~al.} 2002.
\newblock {{T}opological and causal structure of the yeast transcriptional
  regulatory network}.
\newblock \emph{Nat Genet} 31, 60--63.

\bibitem[{Hartwell \emph{et~al.}(1999)Hartwell, Hopfield, Leibler
  \emph{et~al.}}]{HHLM99}
Hartwell, L.~H., Hopfield, J.~J., Leibler, S. \emph{et~al.} 1999.
\newblock {{F}rom molecular to modular cell biology}.
\newblock \emph{Nature} 402, 47--52.

\bibitem[{Hinnebusch(2005)}]{Hinnebusch05}
Hinnebusch, A. 2005.
\newblock {Translational regulation of GCN4 and the general amino acid control
  of yeast}.
\newblock \emph{Annu Rev Microbiol} 59, 407--50.

\bibitem[{Hoffman and Kunze(1971)}]{HK71}
Hoffman, K. and Kunze, R. 1971.
\newblock \emph{Linear algebra}.
\newblock Second edition. Prentice-Hall Inc., Englewood Cliffs, N.J.

\bibitem[{Hwang \emph{et~al.}(2006)Hwang, Cho, Zhang \emph{et~al.}}]{HCZR06}
Hwang, W., Cho, Y.-R., Zhang, A. \emph{et~al.} 2006.
\newblock {{A} novel functional module detection algorithm for protein-protein
  interaction networks}.
\newblock \emph{Algorithms Mol Biol} 1, 24.

\bibitem[{Ito \emph{et~al.}(2001)Ito, Chiba, Ozawa \emph{et~al.}}]{ICOYHS01}
Ito, T., Chiba, T., Ozawa, R. \emph{et~al.} 2001.
\newblock {{A} comprehensive two-hybrid analysis to explore the yeast protein
  interactome}.
\newblock \emph{Proc Natl Acad Sci U S A} 98, 4569--4574.

\bibitem[{Ito \emph{et~al.}(2000)Ito, Tashiro, Muta
  \emph{et~al.}}]{ITMOCNYKS00}
Ito, T., Tashiro, K., Muta, S. \emph{et~al.} 2000.
\newblock {{T}oward a protein-protein interaction map of the budding yeast: {A}
  comprehensive system to examine two-hybrid interactions in all possible
  combinations between the yeast proteins}.
\newblock \emph{Proc Natl Acad Sci U S A} 97, 1143--1147.

\bibitem[{Jones \emph{et~al.}(2001--)Jones, Oliphant, Peterson
  \emph{et~al.}}]{JEP01}
Jones, E., Oliphant, T., Peterson, P. \emph{et~al.} 2001--.
\newblock {SciPy}: Open source scientific tools for {Python}.
\newblock \urlprefix\url{http://www.scipy.org/}.

\bibitem[{Kassir \emph{et~al.}(2003)Kassir, Adir, Boger-Nadjar
  \emph{et~al.}}]{KABR03}
Kassir, Y., Adir, N., Boger-Nadjar, E. \emph{et~al.} 2003.
\newblock {Transcriptional regulation of meiosis in budding yeast}.
\newblock \emph{Int Rev Cytol} 224, 111--71.

\bibitem[{Kelley \emph{et~al.}(2003)Kelley, Sharan, Karp
  \emph{et~al.}}]{KSKS03}
Kelley, B.~P., Sharan, R., Karp, R.~M. \emph{et~al.} 2003.
\newblock {{C}onserved pathways within bacteria and yeast as revealed by global
  protein network alignment}.
\newblock \emph{Proc Natl Acad Sci U S A} 100, 11394--11399.

\bibitem[{Kemeny and Snell(1976)}]{KS76}
Kemeny, J.~G. and Snell, J.~L. 1976.
\newblock \emph{Finite {M}arkov chains}.
\newblock Springer-Verlag, New York.
\newblock Reprinting of the 1960 original, Undergraduate Texts in Mathematics.

\bibitem[{Krogan \emph{et~al.}(2006)Krogan, Cagney, Yu \emph{et~al.}}]{KCYZ06}
Krogan, N.~J., Cagney, G., Yu, H. \emph{et~al.} 2006.
\newblock {{G}lobal landscape of protein complexes in the yeast {S}accharomyces
  cerevisiae}.
\newblock \emph{Nature} 440, 637--643.

\bibitem[{Krogan and Greenblatt(2001)}]{KG01}
Krogan, N.~J. and Greenblatt, J.~F. 2001.
\newblock {{C}haracterization of a six-subunit holo-elongator complex required
  for the regulated expression of a group of genes in {S}accharomyces
  cerevisiae}.
\newblock \emph{Mol Cell Biol} 21, 8203--8212.

\bibitem[{Li \emph{et~al.}(2004)Li, Armstrong, Bertin \emph{et~al.}}]{LABG04}
Li, S., Armstrong, C.~M., Bertin, N. \emph{et~al.} 2004.
\newblock {{A} map of the interactome network of the metazoan {C}. elegans}.
\newblock \emph{Science} 303, 540--543.

\bibitem[{Lorch \emph{et~al.}(2000)Lorch, Beve, Gustafsson
  \emph{et~al.}}]{LBGMK00}
Lorch, Y., Beve, J., Gustafsson, C. \emph{et~al.} 2000.
\newblock {Mediator-nucleosome interaction}.
\newblock \emph{Mol Cell} 6, 197--201.

\bibitem[{Ma and Zeng(2003)}]{MZ03}
Ma, H. and Zeng, A.-P. 2003.
\newblock {{R}econstruction of metabolic networks from genome data and analysis
  of their global structure for various organisms}.
\newblock \emph{Bioinformatics} 19, 270--277.

\bibitem[{Martinez(2002)}]{Martinez02}
Martinez, E. 2002.
\newblock {{M}ulti-protein complexes in eukaryotic gene transcription}.
\newblock \emph{Plant Mol Biol} 50, 925--947.

\bibitem[{Mead \emph{et~al.}(2002)Mead, Bruning, Gill \emph{et~al.}}]{MBGSAV02}
Mead, J., Bruning, A.~R., Gill, M.~K. \emph{et~al.} 2002.
\newblock {{I}nteractions of the {M}cm1 {M}{A}{D}{S} box protein with cofactors
  that regulate mating in yeast}.
\newblock \emph{Mol Cell Biol} 22, 4607--4621.

\bibitem[{Messenguy and Dubois(1993)}]{MD93}
Messenguy, F. and Dubois, E. 1993.
\newblock {{G}enetic evidence for a role for {M}{C}{M}1 in the regulation of
  arginine metabolism in {S}accharomyces cerevisiae}.
\newblock \emph{Mol Cell Biol} 13, 2586--2592.

\bibitem[{Miklos and Maleszka(2004)}]{MM04}
Miklos, G. L.~G. and Maleszka, R. 2004.
\newblock {{M}icroarray reality checks in the context of a complex disease}.
\newblock \emph{Nat Biotechnol} 22, 615--621.

\bibitem[{Mizzen \emph{et~al.}(1996)Mizzen, Yang, Kokubo
  \emph{et~al.}}]{MYKBBOWWBKNA96}
Mizzen, C.~A., Yang, X.~J., Kokubo, T. \emph{et~al.} 1996.
\newblock {{T}he {T}{A}{F}({I}{I})250 subunit of {T}{F}{I}{I}{D} has histone
  acetyltransferase activity}.
\newblock \emph{Cell} 87, 1261--1270.

\bibitem[{Myer and Young(1998)}]{MY98}
Myer, V.~E. and Young, R.~A. 1998.
\newblock {{R}{N}{A} polymerase {I}{I} holoenzymes and subcomplexes}.
\newblock \emph{J Biol Chem} 273, 27757--27760.

\bibitem[{Nabieva \emph{et~al.}(2005)Nabieva, Jim, Agarwal
  \emph{et~al.}}]{NJAC05}
Nabieva, E., Jim, K., Agarwal, A. \emph{et~al.} 2005.
\newblock {{W}hole-proteome prediction of protein function via graph-theoretic
  analysis of interaction maps}.
\newblock \emph{Bioinformatics} 21 Suppl 1, 302--310.

\bibitem[{Neuwald and Landsman(1997)}]{NL97}
Neuwald, A.~F. and Landsman, D. 1997.
\newblock {{G}{C}{N}5-related histone {N}-acetyltransferases belong to a
  diverse superfamily that includes the yeast {S}{P}{T}10 protein}.
\newblock \emph{Trends Biochem Sci} 22, 154--155.

\bibitem[{Newman(2004)}]{Newman04}
Newman, M. E.~J. 2004.
\newblock {{C}oauthorship networks and patterns of scientific collaboration}.
\newblock \emph{Proc Natl Acad Sci U S A} 101 Suppl 1, 5200--5205.

\bibitem[{Osada \emph{et~al.}(2001)Osada, Sutton, Muster
  \emph{et~al.}}]{OSMBYSW01}
Osada, S., Sutton, A., Muster, N. \emph{et~al.} 2001.
\newblock {{T}he yeast {S}{A}{S} (something about silencing) protein complex
  contains a {M}{Y}{S}{T}-type putative acetyltransferase and functions with
  chromatin assembly factor {A}{S}{F}1}.
\newblock \emph{Genes Dev} 15, 3155--3168.

\bibitem[{Papamichos-Chronakis \emph{et~al.}(2002)Papamichos-Chronakis,
  Petrakis, Ktistaki \emph{et~al.}}]{PPKTT02}
Papamichos-Chronakis, M., Petrakis, T., Ktistaki, E. \emph{et~al.} 2002.
\newblock {{C}ti6, a {P}{H}{D} domain protein, bridges the {C}yc8-{T}up1
  corepressor and the {S}{A}{G}{A} coactivator to overcome repression at
  {G}{A}{L}1}.
\newblock \emph{Mol Cell} 9, 1297--1305.

\bibitem[{Pellegrini \emph{et~al.}(2004)Pellegrini, Haynor, and
  Johnson}]{PHJ04}
Pellegrini, M., Haynor, D., and Johnson, J.~M. 2004.
\newblock {{P}rotein interaction networks}.
\newblock \emph{Expert Rev Proteomics} 1, 239--249.

\bibitem[{Pereira-Leal \emph{et~al.}(2004)Pereira-Leal, Enright, and
  Ouzounis}]{PEO04}
Pereira-Leal, J.~B., Enright, A.~J., and Ouzounis, C.~A. 2004.
\newblock {{D}etection of functional modules from protein interaction
  networks}.
\newblock \emph{Proteins} 54, 49--57.
\newblock Evaluation Studies.

\bibitem[{Reggiori \emph{et~al.}(2003)Reggiori, Wang, Stromhaug
  \emph{et~al.}}]{RWSSK03}
Reggiori, F., Wang, C.-W., Stromhaug, P.~E. \emph{et~al.} 2003.
\newblock {{V}ps51 is part of the yeast {V}ps fifty-three tethering complex
  essential for retrograde traffic from the early endosome and {C}vt vesicle
  completion}.
\newblock \emph{J Biol Chem} 278, 5009--5020.

\bibitem[{Ricci \emph{et~al.}(2002)Ricci, Genereaux, and Brandl}]{RGB02}
Ricci, A., Genereaux, J., and Brandl, C. 2002.
\newblock {Components of the SAGA histone acetyltransferase complex are
  required for repressed transcription of ARG1 in rich medium}.
\newblock \emph{Mol Cell Biol} 22, 4033--42.

\bibitem[{Rives and Galitski(2003)}]{RG03}
Rives, A. and Galitski, T. 2003.
\newblock {Modular organization of cellular networks}.
\newblock \emph{Proc Natl Acad Sci U S A} 100, 1128--33.

\bibitem[{Rual \emph{et~al.}(2005)Rual, Venkatesan, Hao \emph{et~al.}}]{RVHH05}
Rual, J.-F., Venkatesan, K., Hao, T. \emph{et~al.} 2005.
\newblock {{T}owards a proteome-scale map of the human protein-protein
  interaction network}.
\newblock \emph{Nature} 437, 1173--1178.

\bibitem[{Salwinski \emph{et~al.}(2004)Salwinski, Miller, Smith
  \emph{et~al.}}]{SMSPBE04}
Salwinski, L., Miller, C.~S., Smith, A.~J. \emph{et~al.} 2004.
\newblock {{T}he {D}atabase of {I}nteracting {P}roteins: 2004 update}.
\newblock \emph{Nucleic Acids Res} 32, 449--451.

\bibitem[{Sanders and Weil(2000)}]{SW00}
Sanders, S.~L. and Weil, P.~A. 2000.
\newblock {{I}dentification of two novel {T}{A}{F} subunits of the yeast
  {S}accharomyces cerevisiae {T}{F}{I}{I}{D} complex}.
\newblock \emph{J Biol Chem} 275, 13895--13900.

\bibitem[{Sharan \emph{et~al.}(2005{\natexlab{a}})Sharan, Ideker, Kelley
  \emph{et~al.}}]{SIKS05}
Sharan, R., Ideker, T., Kelley, B. \emph{et~al.} 2005{\natexlab{a}}.
\newblock {{I}dentification of protein complexes by comparative analysis of
  yeast and bacterial protein interaction data}.
\newblock \emph{J Comput Biol} 12, 835--846.

\bibitem[{Sharan \emph{et~al.}(2005{\natexlab{b}})Sharan, Suthram, Kelley
  \emph{et~al.}}]{SSKK05}
Sharan, R., Suthram, S., Kelley, R.~M. \emph{et~al.} 2005{\natexlab{b}}.
\newblock {{C}onserved patterns of protein interaction in multiple species}.
\newblock \emph{Proc Natl Acad Sci U S A} 102, 1974--1979.

\bibitem[{Shore and Sharrocks(1995)}]{SS95}
Shore, P. and Sharrocks, A.~D. 1995.
\newblock {{T}he {M}{A}{D}{S}-box family of transcription factors}.
\newblock \emph{Eur J Biochem} 229, 1--13.

\bibitem[{Smith \emph{et~al.}(1998)Smith, Eisen, Gu
  \emph{et~al.}}]{SEGSPZCLA98}
Smith, E.~R., Eisen, A., Gu, W. \emph{et~al.} 1998.
\newblock {{E}{S}{A}1 is a histone acetyltransferase that is essential for
  growth in yeast}.
\newblock \emph{Proc Natl Acad Sci U S A} 95, 3561--3565.

\bibitem[{Spirin and Mirny(2003)}]{SM03}
Spirin, V. and Mirny, L. 2003.
\newblock {Protein complexes and functional modules in molecular networks}.
\newblock \emph{Proc Natl Acad Sci U S A} 100, 12123--8.

\bibitem[{Sprinzak \emph{et~al.}(2003)Sprinzak, Sattath, and Margalit}]{SSM03}
Sprinzak, E., Sattath, S., and Margalit, H. 2003.
\newblock {{H}ow reliable are experimental protein-protein interaction data?}
\newblock \emph{J Mol Biol} 327, 919--923.

\bibitem[{Steffen \emph{et~al.}(2002)Steffen, Petti, Aach
  \emph{et~al.}}]{SPAD02}
Steffen, M., Petti, A., Aach, J. \emph{et~al.} 2002.
\newblock {Automated modelling of signal transduction networks}.
\newblock \emph{BMC Bioinformatics} 3, 34.

\bibitem[{Stelzl \emph{et~al.}(2005)Stelzl, Worm, Lalowski
  \emph{et~al.}}]{SWLH05}
Stelzl, U., Worm, U., Lalowski, M. \emph{et~al.} 2005.
\newblock {{A} human protein-protein interaction network: a resource for
  annotating the proteome}.
\newblock \emph{Cell} 122, 957--968.

\bibitem[{Sterner and Berger(2000)}]{SB00}
Sterner, D.~E. and Berger, S.~L. 2000.
\newblock {{A}cetylation of histones and transcription-related factors}.
\newblock \emph{Microbiol Mol Biol Rev} 64, 435--459.

\bibitem[{Sternglanz and Schindelin(1999)}]{SS99}
Sternglanz, R. and Schindelin, H. 1999.
\newblock {{S}tructure and mechanism of action of the histone acetyltransferase
  {G}cn5 and similarity to other {N}-acetyltransferases}.
\newblock \emph{Proc Natl Acad Sci U S A} 96, 8807--8808.
\newblock Comment.

\bibitem[{Struhl(1998)}]{Struhl98}
Struhl, K. 1998.
\newblock {{H}istone acetylation and transcriptional regulatory mechanisms}.
\newblock \emph{Genes Dev} 12, 599--606.

\bibitem[{Suthram \emph{et~al.}(2006)Suthram, Shlomi, Ruppin
  \emph{et~al.}}]{SSRS06}
Suthram, S., Shlomi, T., Ruppin, E. \emph{et~al.} 2006.
\newblock {{A} direct comparison of protein interaction confidence assignment
  schemes}.
\newblock \emph{BMC Bioinformatics} 7, 360.

\bibitem[{{Thouless}(1974)}]{Thouless74}
{Thouless}, D.~J. 1974.
\newblock {Electrons in disordered systems and the theory of localization}.
\newblock \emph{Phys Rep C} 13, 93--142.

\bibitem[{Timmers and Tora(2005)}]{TT05}
Timmers, H. T.~M. and Tora, L. 2005.
\newblock {{S}{A}{G}{A} unveiled}.
\newblock \emph{Trends Biochem Sci} 30, 7--10.

\bibitem[{Traven \emph{et~al.}(2006)Traven, Jelicic, and Sopta}]{TJS06}
Traven, A., Jelicic, B., and Sopta, M. 2006.
\newblock {Yeast Gal4: a transcriptional paradigm revisited}.
\newblock \emph{EMBO Rep} 7, 496--9.

\bibitem[{Uetz \emph{et~al.}(2000)Uetz, Giot, Cagney \emph{et~al.}}]{UGCM00}
Uetz, P., Giot, L., Cagney, G. \emph{et~al.} 2000.
\newblock {{A} comprehensive analysis of protein-protein interactions in
  {S}accharomyces cerevisiae}.
\newblock \emph{Nature} 403, 623--627.

\bibitem[{Wasserman and Faust(1994)}]{WF94}
Wasserman, S. and Faust, K. 1994.
\newblock \emph{Social network analysis}.
\newblock Cambridge University Press, Cambridge.

\bibitem[{Whaley and Petitet(2005)}]{WP05}
Whaley, R.~C. and Petitet, A. 2005.
\newblock Minimizing development and maintenance costs in supporting
  persistently optimized {BLAS}.
\newblock \emph{Software: Practice and Experience} 35, 101--121.

\bibitem[{Wittschieben \emph{et~al.}(1999)Wittschieben, Otero, de~Bizemont
  \emph{et~al.}}]{WOBFEOLATS99}
Wittschieben, B.~O., Otero, G., de~Bizemont, T. \emph{et~al.} 1999.
\newblock {{A} novel histone acetyltransferase is an integral subunit of
  elongating {R}{N}{A} polymerase {I}{I} holoenzyme}.
\newblock \emph{Mol Cell} 4, 123--128.

\bibitem[{Wolffe(1992)}]{Wolffe92}
Wolffe, A.~P. 1992.
\newblock \emph{Chromatin Structure and Function}.
\newblock Academic Press, London, England.

\bibitem[{Wolffe(2001)}]{Wolffe01}
Wolffe, A.~P. 2001.
\newblock {{T}ranscriptional regulation in the context of chromatin structure}.
\newblock \emph{Essays Biochem} 37, 45--57.

\bibitem[{Workman and Kingston(1998)}]{WK98}
Workman, J.~L. and Kingston, R.~E. 1998.
\newblock {{A}lteration of nucleosome structure as a mechanism of
  transcriptional regulation}.
\newblock \emph{Annu Rev Biochem} 67, 545--579.

\bibitem[{Xenarios \emph{et~al.}(2002)Xenarios, Salwinski, Duan
  \emph{et~al.}}]{XSDHKE02}
Xenarios, I., Salwinski, L., Duan, X.~J. \emph{et~al.} 2002.
\newblock {{D}{I}{P}, the {D}atabase of {I}nteracting {P}roteins: a research
  tool for studying cellular networks of protein interactions}.
\newblock \emph{Nucleic Acids Res} 30, 303--305.

\bibitem[{Xiong \emph{et~al.}(2005)Xiong, He, Ding \emph{et~al.}}]{XHDZ05}
Xiong, H., He, X., Ding, C. \emph{et~al.} 2005.
\newblock {{I}dentification of functional modules in protein complexes via
  hyperclique pattern discovery}.
\newblock \emph{Pac Symp Biocomput} 221--232.

\bibitem[{Zhang \emph{et~al.}(2007)Zhang, Blattner, and Yu}]{ZBY07}
Zhang, Y.-C., Blattner, M., and Yu, Y.-K. 2007.
\newblock Heat conduction process on community networks as a recommendation
  model.
\newblock Submitted to {\it Phys. Rev. Lett.}

\end{thebibliography}

\end{document}